\title{Lower Bounds on Tree Covers}
\date{}
\author{
Yu Chen
\and
Zihan Tan
\and
Hangyu Xu}
\newcommand{\tset}{{\mathcal{T}}}
\newcommand{\Esp}{E_{\mathrm{sp}}}
\newcommand{\conv}{ {\mathrm{conv}} }
\newcommand{\myparskip}{3pt}
\begin{document}

\begin{titlepage}
	
\title{Lower Bounds on Tree Covers}

\author{Yu Chen\thanks{National University of Singapore. Email: {\tt yu.chen@nus.edu.sg}.} \and Zihan Tan\thanks{University of Minnesota. Email: {\tt zihantan1993@gmail.com}.} \and Hangyu Xu\thanks{University of Science and Technology of China. Email: {\tt hangyuxu@mail.ustc.edu.cn}.}} 
	
	\maketitle

\begin{abstract}
Given an $n$-point metric space $(X,d_X)$, a tree cover $\mathcal{T}$ is a set of $|\mathcal{T}|=k$ trees on $X$ such that every pair of vertices in $X$ has a low-distortion path in one of the trees in $\mathcal{T}$. Tree covers have been playing a crucial role in graph algorithms for decades, and the research focus is the construction of tree covers with small size $k$ and distortion.

When $k=1$, the best distortion is known to be $\Theta(n)$. For a constant $k\ge 2$, the best distortion upper bound is $\tilde O(n^{\frac 1 k})$ and the strongest lower bound is $\Omega(\log_k n)$, leaving a gap to be closed. In this paper, we improve the lower bound to $\Omega(n^{\frac{1}{2^{k-1}}})$.

Our proof is a novel analysis on a structurally simple grid-like graph, which utilizes some combinatorial fixed-point theorems. We believe that they will prove useful for analyzing other tree-like data structures as well.
\end{abstract}
\end{titlepage}

\thispagestyle{empty}

\clearpage
\pagenumbering{arabic}

% \tableofcontents

\section{Introduction}

Given a metric space $(X,d_X)$ on $|X|=n$ points, an \emph{$(\alpha,k)$-tree cover} is a collection $\tset=\set{T_1,\ldots,T_k}$ of $k$ (edge-weighted trees) on $X$, such that every pair $x,y$ of vertices in $X$,
\begin{itemize}
    \item for each $1\le i\le k$, $\dist_{T_i}(x,y)\ge d_X(x,y)$ (that is, each $T_i$ is a \emph{dominating tree}); and
    \item there exists an $1\le i\le k$, such that $\dist_{T_i}(x,y)\le \alpha\cdot d_X(x,y)$. 
\end{itemize}
The integer $k=|\tset|$ is called the \emph{size} of $\tset$ and the real number $\alpha\ge 1$ is called the \emph{distortion} or \emph{stretch} of $\tset$.
The goal is to construct tree covers with small size and distortion.

As a natural and instrumental approach to approximating metrics by trees, tree cover has been studied extensively and has found numerous algorithmic applications, for example, approximating low-dimensional Euclidean spaces \cite{arya1995euclidean,chang2024optimal}, routing
\cite{awerbuch1992routing,gupta2001traveling,awerbuch2002buffer,chan2016hierarchical}, network design \cite{gupta2006oblivious}, distance oracles \cite{mendel2007ramsey}, Ramsey tree covers \cite{bartal2003metric,bartal2022covering}, Steiner Point Removal \cite{chang2023covering,chang2023resolving,chang2024shortcut}, and many others.

Despite much attention and effort, several fundamental questions regarding the size-distortion trade-off of tree covers remain unsolved, especially in the regime where $k$ is as small as a constant. When $k=1$, the best distortion is known to be $\Theta(n)$: the upper bound is achieved by the minimum spanning tree, and the lower bound is proved by the unweighted $n$-cycle \cite{rabinovich1998lower}.
When $k$ is a small constant, the best upper bound, due to Bartal, Fandina, and Neiman \cite{bartal2022covering}, states that the stretch $\tilde O(n^{1/k})$ is achievable (actually by a more restricted type of tree covers, more discussion in \Cref{subsec: result}). On the other hand, the best lower bound remains $\Omega(\log_k n)$, witnessed by high-girth graphs \cite{bollobas2004extremal}, leaving a big gap.
Closing this gap is viewed as one of the major interesting open problems on tree covers: 
\begin{itemize}
\item ``Is there
a better lower bound? Or perhaps there exists a tree cover with $O(1)$ trees and distortion
$O(\log n)$? What can one say about a tree cover with only 2 trees?'' -- Bartal, Fandina, and Neiman in \cite{bartal2022covering}.
\item ``A very interesting open question: what is the smallest stretch achieved with $2$ trees?'' -- Le\footnote{at Sublinear Graph Simplification workshop at Simons Institute, video recording can be accessed at \url{https://simons.berkeley.edu/talks/hung-le-university-massachusetts-amherst-2024-08-01}}.
\end{itemize}

\subsection{Our result}
\label{subsec: result}

In this paper, we make progress towards closing this gap. Our main result is the following theorem.

\begin{theorem}
\label{thm: main}
For every $k\ge 1$, there is an $n$-point metric, such that any size-$k$ tree cover has stretch $\Omega_k(n^{\frac{1}{2^{k-1}}})$.
\end{theorem}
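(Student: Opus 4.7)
My plan is to prove Theorem~\ref{thm: main} by induction on $k$. The target exponent $1/2^{k-1}$ suggests building the extremal metric as a grid (or torus) whose dimension doubles with each extra tree in the cover. Concretely, I would take $G = C_m^{d}$, a product of $d$ cycles of length $m$, with $d = 2^{k-1}$ and $n = m^{d}$; then the claimed bound $n^{1/2^{k-1}} = m$ matches the side length of each cycle, so the statement is equivalent to saying that $k$ trees cannot beat stretch $\Omega_k(m)$ on this torus. The base case $k=1$ is the classical $\Omega(n)$ distortion lower bound for embedding $C_n$ into a tree.

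For the inductive step, suppose $\tset = \{T_1, \ldots, T_k\}$ is a $k$-tree cover of $G$ with stretch $\alpha$. The plan is to locate a sub-torus $G' \subseteq G$ of dimension $d/2$, obtained by freezing $d/2$ of the coordinates to specific values in $[m]$, such that one tree $T_i$ is effectively useless on $G'$ -- meaning that for every pair $u,v \in V(G')$, the $T_i$-stretch $d_{T_i}(u,v)/d_G(u,v)$ exceeds $\alpha$. (The metric on $G'$ inherited from $G$ coincides with its intrinsic torus metric, since shortest paths between vertices of a coordinate slice can be taken within the slice.) Then the remaining $k-1$ trees restricted to $G'$ form a $(k-1)$-tree cover of $G' \cong C_m^{d/2}$ with stretch $\leq \alpha$, and the inductive hypothesis applied to $k-1$ yields $\alpha = \Omega_k(m) = \Omega_k(n^{1/2^{k-1}})$.

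The technical heart of the argument is locating such a sub-torus $G'$, and this is where combinatorial fixed-point theorems enter. I would parameterize candidate sub-tori by the $m^{d/2}$ possible value-vectors of the frozen coordinates and, for each candidate, attach a label describing, at a coarse level, which tree in $\tset$ is most useful there (equivalently, which tree can be safely discarded). With a suitable triangulation/structure on this parameter space, a Sperner-style (or KKM/Tucker-style) lemma should then yield a candidate sub-torus on which the labeling exhibits the desired deficiency, certifying that some specific $T_i$ is useless on the produced sub-torus.

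I expect the main obstacle to be the design of this labeling: it must simultaneously (i) satisfy the boundary/monotonicity hypotheses needed to invoke the fixed-point theorem, and (ii) have the property that the theorem's combinatorial output certifies metric uselessness of one tree on the resulting sub-torus. A constant-factor loss in $\alpha$, or a relaxation from ``every pair'' to ``an $\Omega(1)$-fraction of pairs'' together with an averaging step to extract a clean substructure, may well be needed to make the labeling well-defined; tracking these losses cleanly through $k-1$ inductive levels would account for the implicit $\Omega_k$ constant in the statement.
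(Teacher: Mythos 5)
Your proposal diverges substantially from the paper's argument, and the divergence sits exactly at the step you flag as the ``technical heart'': the existence of a coordinate slice $G'\cong C_m^{d/2}$ on which one tree has stretch $>\alpha$ for \emph{every} pair. Nothing in your sketch supports this, and it is most likely false: already for $k=2$ on the torus $C_m\times C_m$, the known arguments only produce a single pair of adjacent vertices that is far in both trees (the intersection of a ``horizontal cut'' of $T_1$-bad edges with a ``vertical cut'' of $T_2$-bad edges); there is no reason a whole row or column must be abandoned by one tree, and a tree cover could plausibly have each tree useful on some pairs of every slice. Your fallback --- relaxing to an $\Omega(1)$-fraction of pairs plus averaging --- does not repair the induction, because the remaining $k-1$ trees would then only cover a fraction of the pairs of $G'$, which is not a tree cover, and the base case $k=1$ certifies only one bad edge on a cycle, not a constant fraction, so the relaxed statement cannot be propagated. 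A further, smaller problem is that Tucker-type lemmas require a domain with antipodally symmetric boundary; a product of cycles does not directly provide one, which is why the paper works not with $C_m^d$ but with the integer points of an $\ell_1$-ball of radius $r$ in $\mathbb{R}^{d}$ whose antipodal boundary points are identified (a discrete projective space).

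The paper avoids induction and dimension halving entirely. It makes a single application of Tucker's lemma in dimension $d=2^{k-1}$: each vertex $v$ is labeled by a vector $i(v)\in\{+1,-1\}^k$ whose $j$-th coordinate records the parity of the number of antipodal identifications crossed by the (interpolated) path in $T_j$ from the center to $v$. This labeling is antipodal on the boundary, so Tucker's lemma yields a \emph{complementary edge}: two vertices at $\ell_\infty$-distance $1$ whose labels are opposite in every coordinate, and a parity/unfolding argument shows such a pair is at distance $\ge 2r-d$ in every tree simultaneously. The exponent $2^{k-1}$ arises because the $2^k$ labels must be packed into the $2d$ labels that Tucker's lemma tolerates, i.e.\ $d\ge 2^{k-1}$ --- not from halving the dimension once per tree. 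If you want to salvage your slicing intuition, the right target is a single simultaneously-bad pair, not a tree-free substructure; as written, the inductive step has no proof and the proposal does not establish the theorem.
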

Specifically, when $k=2$, our result implies a lower bound of $\Omega(\sqrt{n})$, nearly matching the upper bound of $\tilde O(\sqrt{n})$ in \cite{bartal2022covering}. For constant $k$, we show a polynomial lower bound, refuting the possibility of $\poly\log n$ stretch. Therefore, \Cref{thm: main} provides an answer to the above questions by Bartal, Fandina, Neiman, and Le.

For proving \Cref{thm: main}, we employ a novel approach. At a high-level, we construct a grid-like structure, compute its triangulation, label each point according to its location in the tree cover, and derive the stretch lower bound by Tucker's Lemma. Our proof reveals a close connection between tree covers and the celebrated fixed-point theorems in combinatorial topology, and we believe that this connection will be useful in analyzing tree covers in different settings and other tree-like data structures.

The stretch upper bound of $\tilde O(n^{1/k})$ in \cite{bartal2022covering} is actually achieved by Ramsey tree covers, a restricted type of tree covers that require each vertex to choose a home tree to preserve its distances to all other vertices (formal definition in \Cref{sec: ramsey}). They also showed a stretch lower bound of $\Omega(n^{1/k})$ for Ramsey tree covers.
As a side result, we give an alternative proof of this near-optimal lower bound. Our hard instance is a high-dimensional cyclic grid, which is structurally simpler than the recursive construction in \cite{bartal2022covering}. Our proof is similar to the proof of \Cref{thm: main} and also makes use of a variant of Tucker's Lemma.

\section{Technical Overview}

In this section, we provide a high-level overview of our techniques, with the focus on the connection between tree covers and combinatorial topology.

\subsection*{Case $k=1$: examining a tree-based tour on a cycle}
\label{sec: case k=1}

We start by reviewing the proof of the $\Omega(n)$ distortion lower bound for the case $k=1$. We present a proof sketch similar to the one in \cite{rabinovich1998lower}, and an alternative proof can be found in \cite{gupta2001steiner}.

Consider an unweighted cycle $(0,1,2,\ldots,n-1,0)$ and a dominating tree $T$ on the same set of vertices. We may assume without loss of generality that for each edge $(i,j)$ in $T$ (with $i<j$), its length/weight is the cycle-distance between vertices $i$ and $j$, namely $\dist_T(i,j)=\min\set{j-i,n-(j-i)}$, as otherwise we can decrease the length of this edge without hurting the distortion of $T$. To reveal the topological difference between a tree and a cycle, we ``interpolate'' their edges as follows.

\begin{enumerate}
\item For each edge $(i,j)$ in $T$, we subdivide it by $\dist_T(i,j)-1$ new nodes (called \emph{\color{red} interpolating nodes}), which are then labeled by consecutive integers connecting $i,j$ in the natural order. For example, edge $(3,6)\to (3,{\color{red}4},{\color{red}5},6)$, and edge $(1,n-3)\to (1,{\color{red}0},{\color{red}n-1},{\color{red}n-2},n-3)$. Edges $(3,{\color{red}4}),({\color{red}4},{\color{red}5})({\color{red}5},6)$ are called \emph{interpolating edges}. See \Cref{fig: 1tree} for an illustration. 
\item For each $i$, we define $\Pi_i$ as the sequence of all nodes (including the old ones and the interpolating ones) we encounter while walking along the path from the old $i$ to the old $i+1$ in $T$. Observe that the tree-distance between old nodes $i,i+1$ is $|\Pi_i|$, the number of interpolating edges in $\Pi_i$.
Here $\Pi_{n-1}$ connects the old node $n-1$ to the old node $0$ in the tree.
\item
Finally, we define $\Pi$ as the circular concatenation of sequences $\Pi_0,\Pi_1\ldots,\Pi_{n-1}$, identifying, for each $i$, the last node of $\Pi_i$ and the first node of $\Pi_{i+1}$ (both being the old $i+1$), so $\Pi$ is a closed circular tour.
\end{enumerate}

\begin{figure}[h]
	\centering
	\subfigure[The interpolated tree: old nodes (blue) and interpolating nodes (red). $\Pi_5$ illustrated in yellow dashed line.]
	{\scalebox{0.104}{\includegraphics{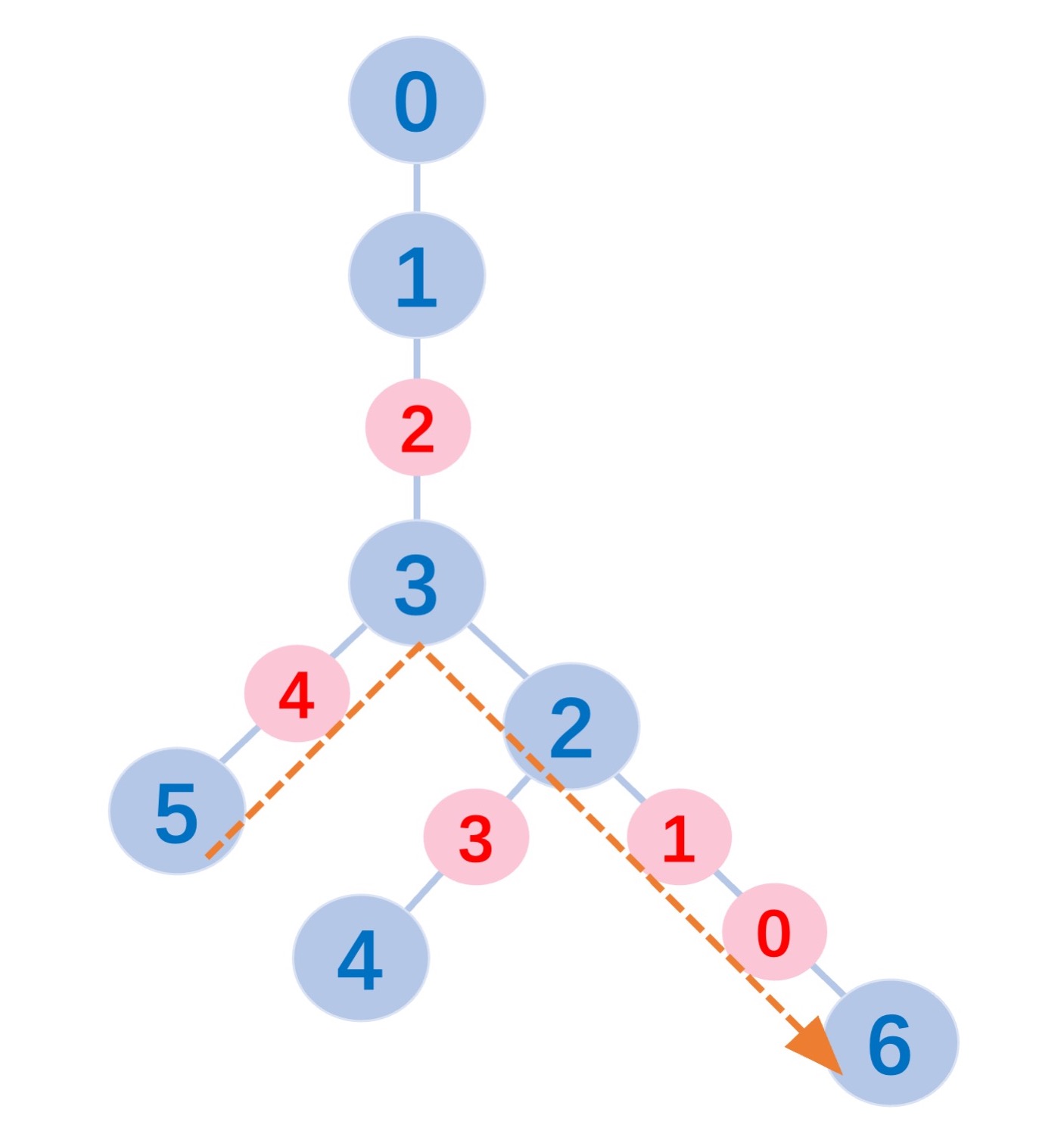}}}
	\hspace{0.2cm}
	\subfigure[The sequences $\Pi_i$'s from the interpolated tree on the left.]
	{
		\scalebox{0.104}{\includegraphics{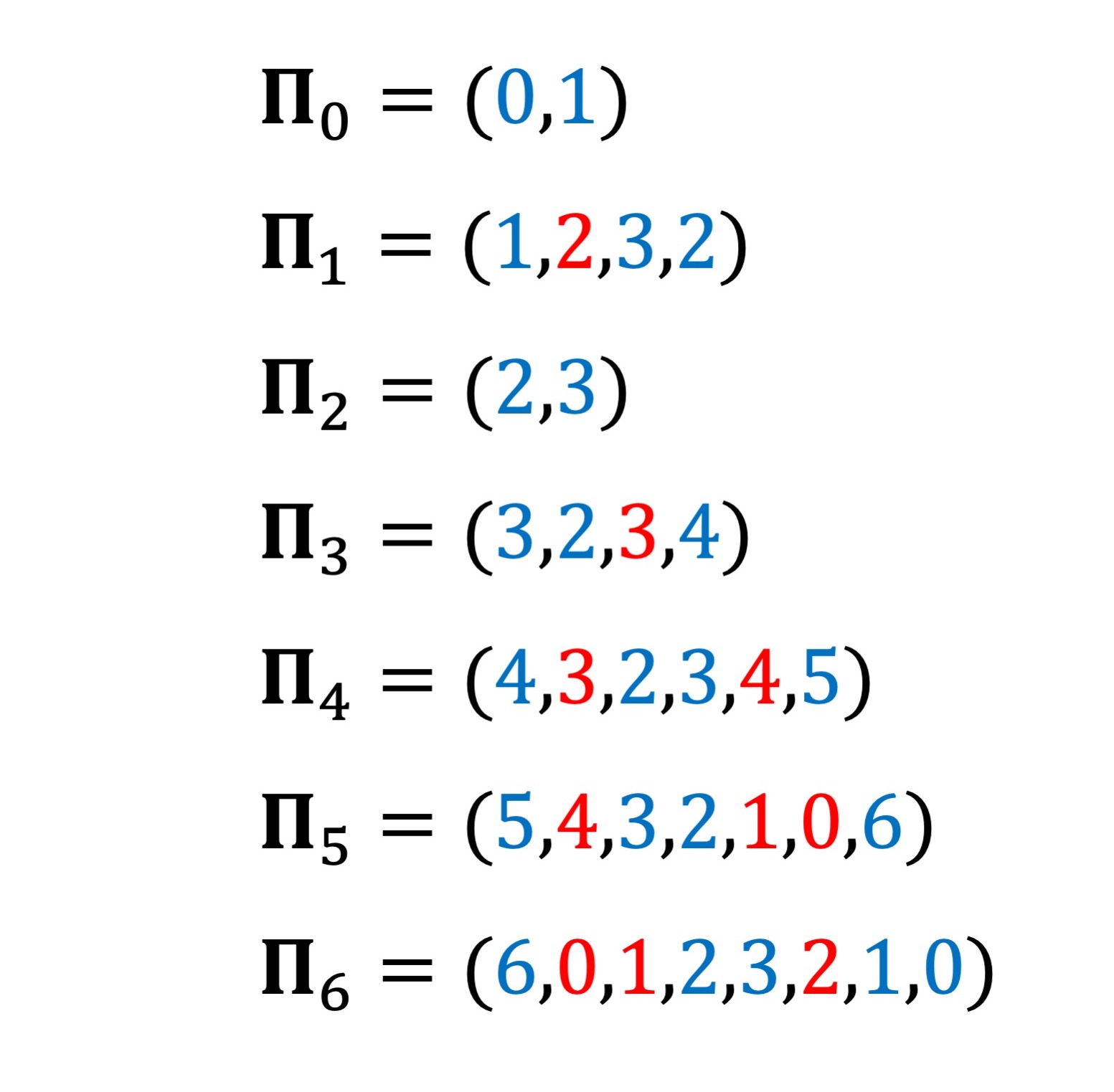}}}
        \hspace{0.2cm}
	\subfigure[$\Pi_5$ is the only bad $\Pi_i$, as it goes from $5$ to $6$ around the cycle, i.e., from the ``far side''.]
	{
		\scalebox{0.104}{\includegraphics{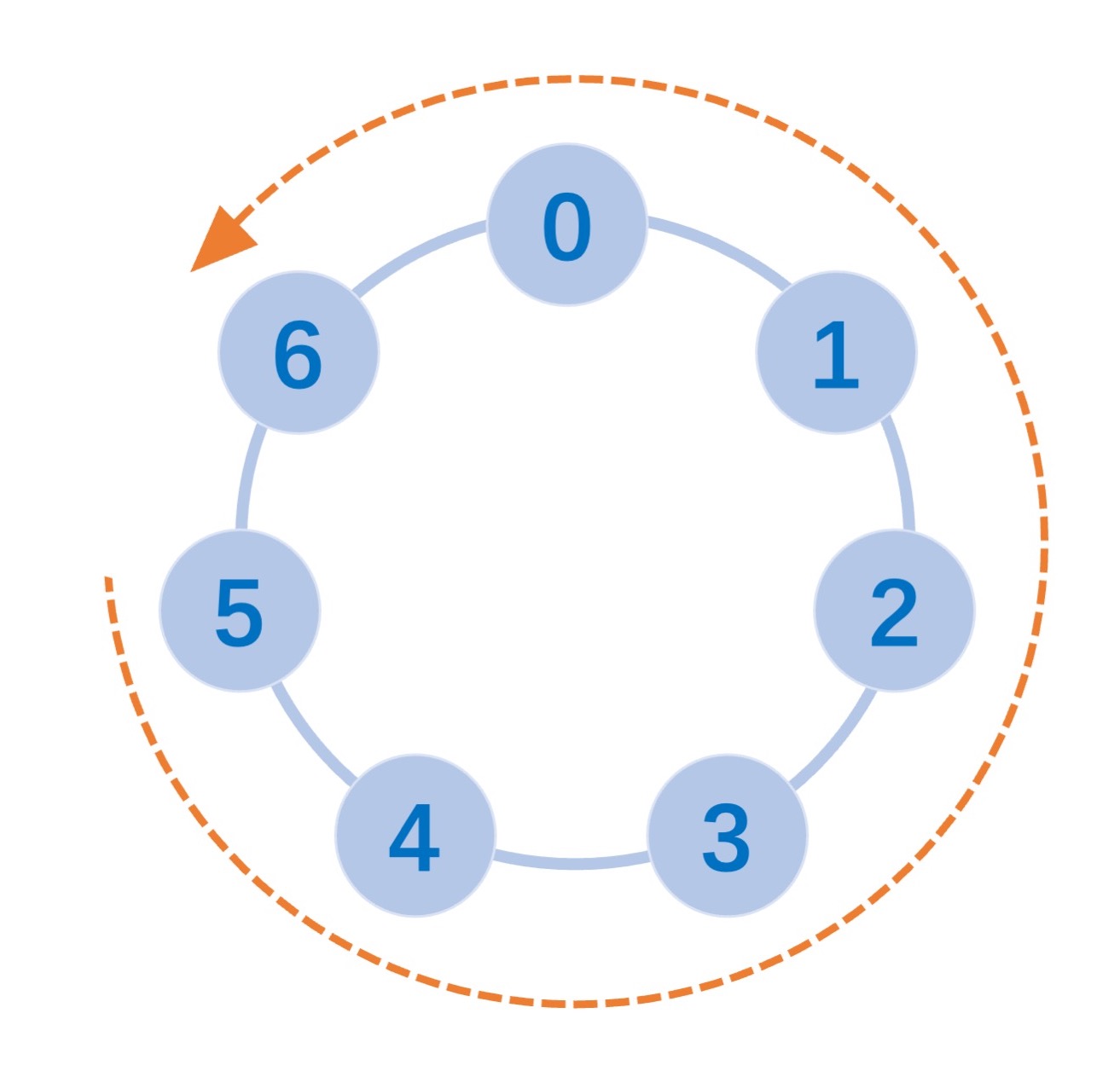}}}
	\caption{Interpolated tree, sequences $\Pi_i$'s, and a bad $\Pi_5$.\label{fig: 1tree}}
\end{figure}

It is immediate to verify that all $\Pi_i$ and $\Pi$ are continuous, in that any integer $j$ is sided by either $j+1$ or $j-1$. The key to the lower bound proof is to \emph{examine the continuous behaviors of $\Pi$ and $\Pi_i$'s, which are extracted from the tree, on the cycle}.

The shortest way to go continuously from $i$ to $i+1$ on the cycle is to directly use the edge $(i,i+1)$.
We say that $\Pi_i$ is \emph{good}\footnote{There is a notion in combinatorial topology called ``degree of a mapping'', and our notion ``good'' is similar to their ``degree-$0$'' case. For the purpose of technical overview, we do not go deep in formally presenting the notions.}, if the number of $(i,i+1)$ appearances in $\Pi_i$ exceeds that of $(i+1,i)$ by $1$.
For example, $\Pi_2=(2,{\color{red}3},{\color{red}4},5,{\color{red}4},3)$ is good, but $(2,{\color{red}1},{\color{red}0},{\color{red}n-1}\ldots,{\color{red}4},3)$ is not.
Intuitively, $\Pi_i$ is good if it ``essentially travels on the short side of cycle''.
See \Cref{fig: 1tree} for an illustration.
The following two properties are easy to prove:
\begin{itemize}
    \item if $\Pi_i$ is good, the number of $(j,j+1)$ appearances equals that of $(j+1,j)$ for each $j\ne i$; and
    \item if $\Pi_i$ is bad, it contains all $n$ numbers and therefore has length at least $n-1$.
\end{itemize}

The crucial observation is that, if some $\Pi_i$ is bad, then the distortion of $T$ is at least $n-1$, as the cycle-distance between $i,i+1$ is $1$ while the tree-distance between $i,i+1$ is $|\Pi_i|\ge n-1$. Therefore, it remains to show that not all $\Pi_i$'s are good. Assume for contradiction that they all are. Consider now the closed tour $\Pi$. From the first property of good $\Pi_i$'s, we know that for each $i$,
the $(i,i+1)$ appearances exceed $(i+1,i)$ by $1$. However, as $\Pi$ is a closed tour that goes continuously around the tree, every time it travels from $i$ to $i+1$ along an interpolating edge $(i,i+1)$, later it has to travel back along the same edge from $i+1$ to $i$, so the $(i,i+1)$ appearances should equal that of $(i+1,i)$, leading to a contradiction.

To summarize, we interpolate the tree edges, mapping the cycle into a continuous tree-tour $\Pi$ which consists of sub-trajectories $\Pi_0,\ldots,\Pi_{n-1}$, and examine their behaviors from the topological perspective. The key point is that a cycle must contain an edge $(i,i+1)$ whose corresponding trajectory $\Pi_i$ is bad, and such an edge, which we call a \emph{bad edge}, is the cause of $\Omega(n)$ distortion.

\subsection*{Case $k=2$: the torus grid}

The best distortion upper bound for the $k=2$ case is $\tilde O(\sqrt n)$  \cite{bartal2022covering}. Therefore, a natural first graph to try proving a matching lower bound on is the $\sqrt{n}\times \sqrt{n}$ grid. However, with a delicate design, two trees can actually cover the grid with distortion $O(1)$.
We present a sketch of the construction in \Cref{apd: two trees}. Two concurrent works \cite{le2025coveringeuclideanplanepair,bikeev2025treecoverssize2} independent to ours give a similar construction with a much detailed presentation.

According to the lesson from the $k=1$ case, to make it harder for trees to cover, we should plant more cycles on the grid. Specifically, we connect each first-row vertex to their corresponding last-row vertex, and connect each first-column vertex to their corresponding last-column vertex, so the resulting graph is a torus grid.  As we show next, this torus grid poses a $\Omega(\sqrt{n})$ distortion lower bound for two trees.

We focus on the following two types of non-contractible cycles: vertical/horizontal cycles, which are cycles that go around the grid in the vertical/horizontal way, respectively. See \Cref{fig: 2tree} for an illustration. Both types of cycles are of length at least $\sqrt{n}$.

\begin{figure}[h]
	\centering
	\subfigure[The torus grid: the down-edge from $f'$ goes to $a$ of the first row, and similar for vertices $b,\ldots,g,b'\ldots,f'$. A vertical cycle is shown in purple.]
	{\scalebox{0.084}{\includegraphics{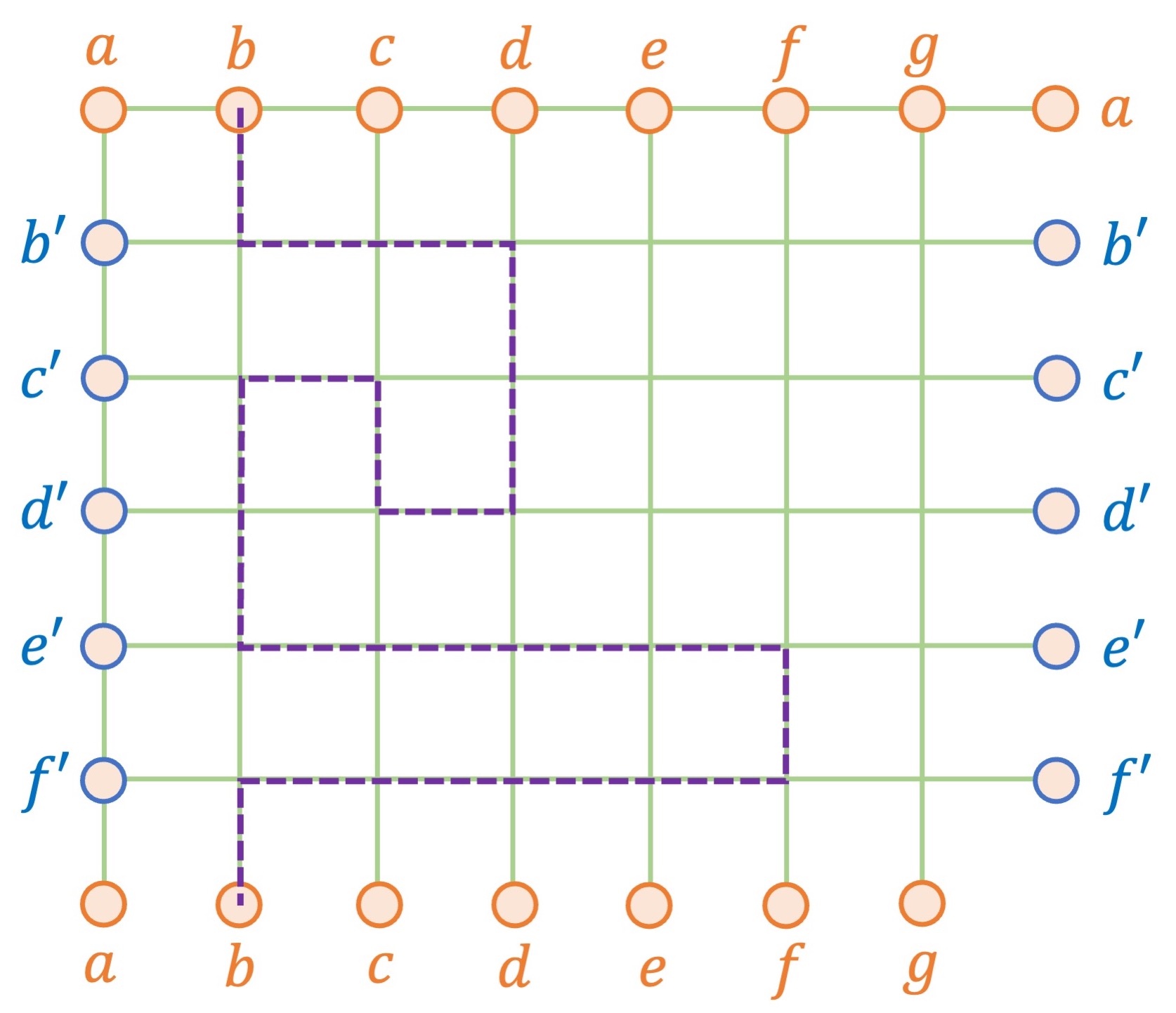}}}
	\hspace{0.2cm}
	\subfigure[Bad edges for $T_1$ are shown in black, forming a ``horizontal cut'': every vertical cycle has to contain at least one such edge.]
	{
		\scalebox{0.231}{\includegraphics{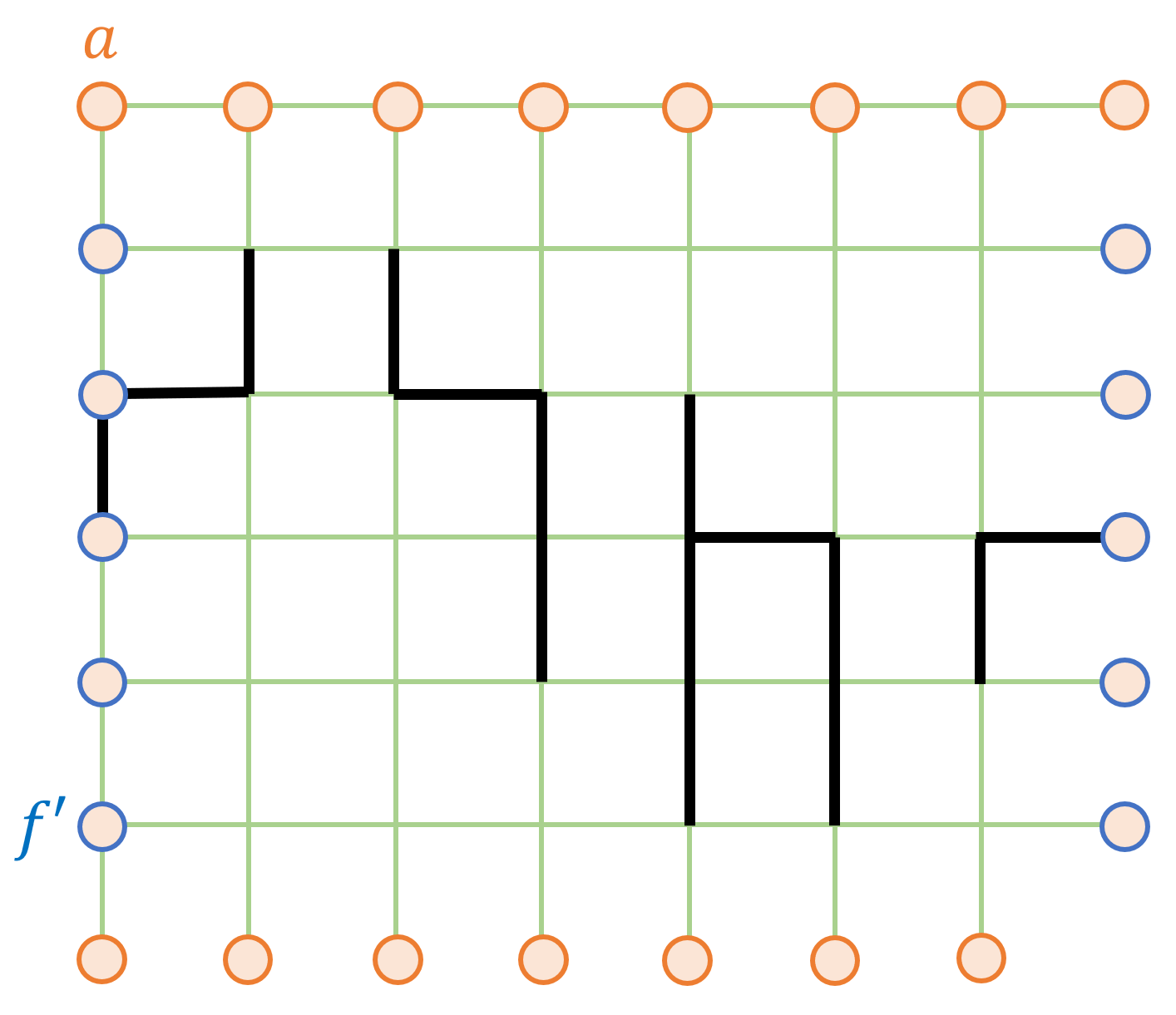}}}
        \hspace{0.2cm}
	\subfigure[A $T_2$-bad edge (red) shares an endpoint with a $T_1$-bad edge (black).]
	{
		\scalebox{0.091}{\includegraphics{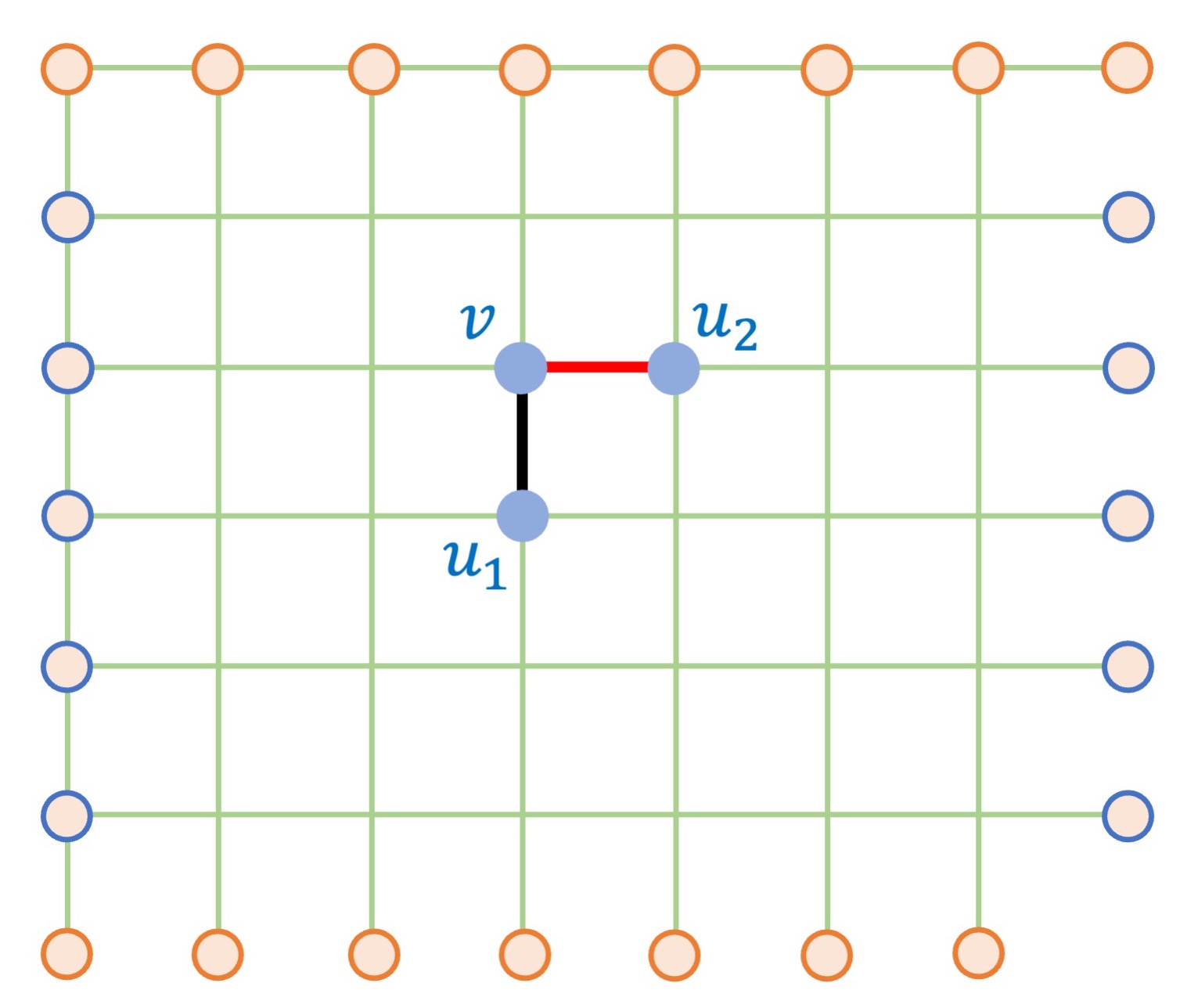}}}
	\caption{The torus grid, horizontal and vertical cycles, and  $T_1$ and $T_2$-bad edges.\label{fig: 2tree}}
\end{figure}

We denote the trees by $T_1,T_2$. To prove the $\Omega(\sqrt{n})$ distortion lower bound, it suffices to find a pair of vertices that are $O(1)$-close in the torus grid but $\Omega(\sqrt{n})$-far in both trees. The focus is still finding bad edges, similar to the $k=1$ case. At a high level, the lower bound proof can be summarized as follows.

\begin{enumerate}
\item Intuitively, every vertical cycle contains a bad edge for $T_1$. Therefore, in order to hit all vertical cycles, the union of all such edges must form a ``horizontal cut'' (see \Cref{fig: 2tree}).
\item Similarly, every horizontal cycle contains a bad edge for $T_2$. Therefore, in order to hits all horizontal cycles, the union of all such edges must form a ``vertical cut''.
\item The horizontal cut has the shape of a horizontal cycle, and so it must intersect the vertical cut. At their intersection we can find the desired pair of vertices. Specifically, as illustrated in \Cref{fig: 2tree}, a $T_1$-bad edge $(v,u_1)$ and a $T_2$-bad edge $(v,u_2)$ share a common endpoint $v$, so $\dist_{T_1}(v,u_1)=\Omega(\sqrt n)$ and $\dist_{T_2}(v,u_2)=\Omega(\sqrt n)$.
Then 
\begin{itemize}
    \item if $\dist_{T_2}(v,u_1)=\Omega(\sqrt n)$, then $v,u_1$ are $\Omega(\sqrt n)$-far  in both trees;
    \item if $\dist_{T_1}(v,u_2)=\Omega(\sqrt n)$, then $v,u_2$ are $\Omega(\sqrt n)$-far  in both trees;
    \item if $\dist_{T_2}(v,u_1),\dist_{T_1}(v,u_2)=o(\sqrt n)$, then by triangle inequality, 
    \[\dist_{T_1}(u_1,u_2)\ge \dist_{T_1}(v,u_1)-\dist_{T_1}(v,u_2)=\Omega(\sqrt n),\] 
    and similarly
    $\dist_{T_2}(u_1,u_2)=\Omega(\sqrt n)$, and so
    $u_1,u_2$ are $\Omega(\sqrt n)$-far in both trees.
\end{itemize}
\end{enumerate}

In fact, instead of the torus grid, our proof of \Cref{thm: main} (presented in \Cref{sec: tree cover}) actually works with the grid structure on the $2$-dimensional projective plane: that is, the graph obtained from the $\sqrt{n}\times \sqrt{n}$ grid by connecting (i) for each $i$, the $i$-th vertex of the first row to the $(\sqrt{n}-i)$-th vertex of the last row (rather than the $i$-th vertex of the last row as in the grid); and (ii) for each $i$, the $i$-th vertex of the first column to the $(\sqrt{n}-i)$-th vertex of the last column (rather than the $i$-th vertex of the last column as in the grid). The horizontal/vertical cycles and cuts can be defined analogously and analyzed similarly.

\subsection*{General $k$: the role of Tucker's Lemma}

Based on our approach in the $k=2$ case, we can set up a similar plan for proving lower bounds for general $k$: construct a $k$-dimensional grid with length $n^{1/k}$, make it cyclic by connecting $i$-dimensional antipodal points for each dimension $i$, find the ``dimension-$i$ cut'' consisting of $T_i$-bad edges, which is essentially a $(k-1)$-dimensional plane, and finally analyze the intersection of all $k$ planes.

The issue resides in the very last step (Step 3 in the $k=2$ case): we are only guaranteed that, around the intersection, for each $i$, some pair of points cross the dimension-$i$ cut, but that does not guarantee the existence of a pair of points simultaneously crossing all dimension-$i$ cuts.
Specifically, we consider the $k$-dimensional cube that represents the intersection of all dimension-cut planes. We associate with each vertex a vector in $\set{+1,-1}^k$ as follows. For each vertex, for each $i$, if it lies on the left side (left and right sides can be dictated in an arbitrary way) of the dimension-$i$ cut, then we set the $i$-th coordinate of its associated vector $v$ as $v_i=+1$, otherwise we set $v_i=-1$. Then, among all vectors in the cube,
\begin{itemize}
    \item what we can prove (via the results in \cite{musin2014sperner}): for each $i$, there exists a pair of vertices whose associated vectors $u,v$ satisfy that $u_i=-v_i$ (such a cube is called a \emph{centrally labeled} cube); and
    \item the goal is to show: there is a pair of vertices whose associated vectors $u,v$ satisfy that $u_i=-v_i$ for all $i$ (such a pair of vertices are said to form a \emph{complementary edge}).
\end{itemize} 

In fact, what we can prove does not necessarily imply the goal for $k\ge 3$. For example, the $8$ vectors in a $3$-dimensional cube can be 
\[\set{(1,1,1)\times 5, (1,-1,-1),(-1,-1,1),(-1,1,-1)}.\]
So the cube is indeed a centrally labeled cube, but there are no complementary edges. 

Luckily, the existence of a centrally labeled cube is enough to derive a $\Omega(n^{1/k})$ lower bound for Ramsey tree covers with $k$ trees. We present the details of this proof in \Cref{sec: ramsey}.

But to prove lower bound for general tree covers, we have to find complementary edges, and this is where we need Tucker's lemma, which states that, in a triangulated domain of $\mathbb{R}^d$, in a labeling of its vertices by integers in $\set{-d\ldots,-1,1,\ldots,d}$ that is antipodally symmetric on the domain's boundary, there is a complementary edge. Since the number of possible labels in our setting is $2^k$ (the size of set $\set{+1,-1}^k$), we have to significantly increase the number of dimensions in our construction from $k$ to $d=2^{k-1}$, which reduces the length of our grid-like structure from $n^{\frac{1}{k}}$ to $n^{\frac{1}{2^{k-1}}}$. Therefore, we are only able to prove an $\Omega(n^{\frac{1}{2^{k-1}}})$ stretch lower bound for general tree covers with this approach. An additional modification to our grid-like construction is needed in order to achieve the antipodally symmetric property required in Tucker's lemma: for the $i$-th dimension, instead of connecting pairs of vertices whose connection is parallel to the dimension-$i$ axis, we need to connect pairs of vertices that are antipodally symmetric. In light of this, instead of working with a cube-shape grid, we will work with an octahedron-shape grid, as it simplifies the analysis.
We present the details in \Cref{sec: tree cover}.

\section{Preliminaries: Simplices, Triangulations, and Tucker's Lemma}
% \xhy{copy from "a geometric approach"; need adjustment}
A set $\sigma \subseteq \mathbb{R}^d$ is a \emph{$k$-dimensional simplex} (or \emph{$k$-simplex}) if it is the convex hull $\conv(A)$ of a set $A$ of $k+1$ affinely independent points. \emph{Vertices} of $\sigma$, denoted as $V(\sigma)$, refer to points in $A$.  Simplices of the form $\conv(B)$ where $B \subseteq A$ are called \emph{faces} of $\sigma$.
% \xhy{triangulation}

Our proof is based on some combinatorial fixed-point theorems. To formally state them, we need the notion of triangulation and its antipodal symmetry.
\begin{definition}[Triangulation]
Let $X$ be a subset of $\mathbb{R}^d$ that is homeomorphic to the $\ell_2$-norm unit ball. A finite family $\Gamma$ of simplices is a \emph{triangulation of} $X$, iff $\bigcup_{\sigma \in \Gamma}\sigma = X$, and the following two properties hold:
\begin{itemize}
\item For each simplex $\sigma \in \Gamma$, all faces of $\sigma$ are also simplices in $\Gamma$.
\item For any two simplices $\sigma_1,\sigma_2 \in \Gamma$, the intersection $\sigma_1 \cap \sigma_2$ is a face of both $\sigma_1$ and $\sigma_2$.
\end{itemize}
We define the vertices of $\Gamma$ as $V(\Gamma)=\bigcup_{\sigma\in \Gamma}V(\sigma)$.
We define the \emph{boundary} of $\Gamma$ to be the triangulation $\partial(\Gamma)\subseteq \Gamma$ containing all simplices in $\Gamma$ that lie entirely in the boundary $\partial(X)$ of $X$.
\end{definition}

%We also need the notion of antipodal symmetry:
\begin{definition}[Antipodal Symmetry]
A triangulation $\Gamma$ is said to be \emph{antipodally symmetric on the boundary} if, for all simplices $\sigma \in \partial(\Gamma)$, the opposite simplex $-\sigma$ ($-\sigma=\set{-v\mid v\in \sigma}$) also lies in $\partial(\Gamma)$.
\end{definition}

%We denote $B^n$ as the \emph{$d$-dimensional unit ball}, i.e., $\{x\in R^n, ||x||_2\le 1\}$. %Then $B^n$ is naturally antipodally symmetric on the boundary.\xhy{not, related to triangulation}

% \xhy{label function}
\begin{definition}[Labeling of Triangulation]
A labeling of triangulation $\Gamma$ is an assignment $\ell:V(\Gamma)\rightarrow L$ that assigns an integer of a set $L\subseteq \mathbb{Z}$ to each vertex of $\Gamma$.
\end{definition}

% \xhy{tucker lemma}
With the above definitions, we can now state Tucker's lemma (originally in \cite{tucker1945some}; see \cite{meunier2006pleins} for a survey) and its variant.
\begin{theorem}[Tucker's Lemma]\label{thm: tucker}
Let $\Gamma$ be a triangulation of the domain $X\subset\mathbb{R}^d$ that is antipodally symmetric on the boundary.  Let $\ell:V(\Gamma)\rightarrow \{-d,\ldots,-2,-1,1,2,\ldots,d\}$ be a labeling such that $\ell(v) = -\ell(-v)$ for all $v \in \partial(X)$.  Then $\Gamma$ contains a complementary edge, that is, an edge $(v,v')$ with $\ell(v)=-\ell(v')$.
\end{theorem}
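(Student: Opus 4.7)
The plan is to prove Tucker's Lemma by contradiction via a combinatorial path-following argument, in the spirit of the constructive proof of Sperner's Lemma. Suppose for contradiction that $\Gamma$ contains no complementary edge, so within every simplex no two vertices carry opposite labels.

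Fix the target alternating sequence $(+1, -2, +3, -4, \ldots, (-1)^{d-1}d)$. Following a Freund--Todd style approach, I would call a $(d-1)$-face of $\Gamma$ a \emph{door} if its set of vertex labels equals $\{+1, -2, +3, \ldots, (-1)^{d-2}(d-1)\}$, and call a $d$-simplex \emph{happy} if its label multiset equals the full sequence $\{+1, -2, +3, \ldots, (-1)^{d-1}d\}$. A case analysis, leveraging the no-complementary-edge assumption, shows that any $d$-simplex possessing at least one door face either is happy (in which case it has exactly one door face) or contains exactly two door faces (paired through a single repeated label on the remaining two vertices). The graph $G$ whose nodes are the $d$-simplices with at least one door face, and whose edges join two such simplices across a shared door face, therefore has all degrees in $\{1,2\}$, and hence decomposes into disjoint paths and cycles whose path endpoints are precisely the happy $d$-simplices together with the $d$-simplices incident to a boundary door face.

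The parity of the number of happy $d$-simplices must therefore equal that of the number of boundary doors on $\partial(\Gamma)$. The latter is analyzed by induction on $d$: the boundary $\partial(\Gamma)$ is an antipodally symmetric triangulation of a sphere $S^{d-1}$ carrying an antipodally antisymmetric labeling, and after restricting attention to vertices whose labels avoid $\pm d$ (the only labels that cannot appear in a door), a strengthened inductive hypothesis that tracks door parities in one dimension lower yields that this count is odd. The base case $d=1$ is the elementary observation that a path between two oppositely-labeled endpoints must carry a sign change on some edge. On the happy-simplex side, a direct antipodal pairing (propagated inward from the boundary symmetry) produces a fixed-point-free involution on happy simplices, forcing that count to be even. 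The mismatch between an odd number of boundary doors and an even number of happy $d$-simplices yields the contradiction.

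The main obstacle I expect is the inductive step: carefully controlling how labels $\pm d$ behave on the boundary under the antipodal symmetry, and formulating a strengthened inductive statement about door parities that propagates cleanly from dimension $d-1$ up to dimension $d$. This is the technical heart of the classical proofs, and the original argument in \cite{tucker1945some} together with the survey \cite{meunier2006pleins} handle it through detailed combinatorial bookkeeping. An alternative route is to recall that Tucker's Lemma is equivalent to the Borsuk--Ulam theorem, and construct from the labeling a piecewise-linear antipodally antisymmetric map $f : X \to \mathbb{R}^d$ that is nonzero on vertices of $\Gamma$; applying Borsuk--Ulam produces a zero of $f$, which then localizes a complementary edge. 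We prefer the combinatorial path-following route, as it fits the combinatorial-topology flavor of the remainder of the paper and generalizes naturally to the variant of Tucker's Lemma used later.
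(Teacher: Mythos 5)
The paper does not prove Tucker's Lemma at all: it is imported as a classical black box, with \cite{tucker1945some} for the original and \cite{meunier2006pleins} for a survey of proofs, so there is no in-paper argument to compare yours against. A citation is the intended ``proof'' here, and your instinct to defer the hard inductive bookkeeping to the literature is consistent with that. However, since you do commit to a specific combinatorial skeleton, I should point out that the skeleton as stated is broken.

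The central structural claim --- that a $d$-simplex containing a door is either happy with exactly one door or has exactly two doors --- is false, because a door is a $(d-1)$-face with $d$ vertices but only $d-1$ distinct labels, so it already carries a repeated label; this is exactly where Tucker differs from Sperner, where doors are rainbow faces. Concretely, take $d=2$: a door is an edge with both endpoints labeled $+1$, and a triangle labeled $(+1,+1,+2)$ has exactly one door yet is not happy and need not touch the boundary, so it is a path endpoint unaccounted for in your parity bookkeeping; a triangle labeled $(+1,+1,+1)$ has three doors, breaking the degree-$\{1,2\}$ claim; and a ``happy'' triangle labeled $(+1,-2,-2)$ has zero doors rather than one. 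A second, independent gap is the claimed fixed-point-free antipodal involution on happy $d$-simplices: the triangulation is antipodally symmetric only on the boundary, so interior simplices have no antipodal partners and no such involution exists; moreover the correct parity statements in this area (Ky Fan's lemma) make the count of alternating simplices \emph{odd}, not even. The genuine combinatorial proofs (Freund--Todd for special triangulations, Prescott--Su in general, as surveyed in \cite{meunier2006pleins}) avoid both problems by path-following through ``alternating'' simplices of \emph{all} dimensions with strictly increasing absolute labels, which is considerably more delicate than the Sperner-style door/room argument you transplant. The Borsuk--Ulam route you mention in passing (send each vertex $v$ to $\mathrm{sign}(\ell(v))\cdot e_{|\ell(v)|}$, extend piecewise-linearly, and note that a zero of the extension forces a complementary pair within some simplex) is the standard correct derivation and would be the cleaner way to make your write-up rigorous; for this paper, though, the citation suffices.
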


We can further use $2^d$ labels, instead of $2d$ labels in Tucker's lemma, to obtain the following extension of Tucker's lemma (Theorem 1.9 in \cite{grant2013geometric}):

\begin{theorem}[\cite{grant2013geometric}]\label{thm: tuckercube}
Let $\Gamma$ be a triangulation of $X\subset \mathbb{R}^d$ that is antipodally symmetric on the boundary of the domain $X$.  Let $\ell:V(\Gamma)\rightarrow \set{+1,-1}^k$ be a labeling such that $\ell(v) = -\ell(-v)$ for all $v \in \partial(X)$.  Then $\Gamma$ contains a neutral simplex, that is, a simplex $\sigma$ such that for all $1\le i\le n$, there exist $v,v'\in V(\sigma)$ such that $\ell_i(v)=-\ell_i(v')$, where $\ell_i(v)$ is the $i^{\mathrm{th}}$ coordinate of $\ell(v)$.
\end{theorem}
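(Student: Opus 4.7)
The plan is to deduce \Cref{thm: tuckercube} from the classical Tucker's Lemma (\Cref{thm: tucker}) by a contrapositive relabeling argument. Assume for contradiction that $\Gamma$ admits no neutral simplex. Then for every simplex $\sigma \in \Gamma$ there exists a \emph{witness} pair $(i_\sigma, s_\sigma) \in \set{1,\ldots,k} \times \set{+1,-1}$ such that $\ell_{i_\sigma}(v) = s_\sigma$ for every $v \in V(\sigma)$, since otherwise every coordinate flips sign within $V(\sigma)$ and $\sigma$ is already neutral. The goal is to turn this witness family into a vertex labeling $\ell' : V(\Gamma) \to \set{-d,\ldots,-1,1,\ldots,d}$ (possibly after refining $\Gamma$) that is antipodally symmetric on the boundary but contains no complementary edge, directly contradicting \Cref{thm: tucker}.

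Concretely, I would first barycentrically subdivide $\Gamma$ into $\Gamma'$, so that vertices of $\Gamma'$ correspond bijectively to simplices of $\Gamma$ and each simplex of $\Gamma'$ is contained in a unique simplex of $\Gamma$. For each vertex $v'$ of $\Gamma'$ corresponding to a simplex $\sigma$ of $\Gamma$, set $\ell'(v') = s_\sigma \cdot i_\sigma$, where the witness is chosen by a fixed lexicographic rule (e.g.\ smallest-index coordinate on which $\ell$ is monochromatic over $V(\sigma)$, with a canonical choice of sign). Antipodal symmetry on $\partial(\Gamma')$ should follow automatically: since $\ell(-v) = -\ell(v)$ on $\partial(X)$, a boundary simplex $\sigma$ with witness $(i,s)$ has its antipode $-\sigma$ witnessed by $(i,-s)$, yielding $\ell'(-v') = -\ell'(v')$. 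Applying \Cref{thm: tucker} to $(\Gamma',\ell')$ then produces a complementary edge $(u',v')$ with $\ell'(u') = -\ell'(v') = s \cdot i$; unwinding the encoding, the corresponding simplices $\sigma_u, \sigma_v$ of $\Gamma$ share a common face on which $\ell_i$ takes the value $+1$ from one side and $-1$ from the other, which is impossible under the witness assumption.

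The main obstacle will be making a Tucker complementary edge of $\ell'$ descend to a genuine neutral simplex of $\ell$ in $\Gamma$. Tucker's edge records a single-coordinate sign flip across adjacent vertices, whereas neutrality demands that \emph{all} $k$ coordinates flip inside a single simplex of $\Gamma$. I expect this gap to be bridged by an iterative Ky~Fan--style chain argument along $\Gamma'$: starting from the Tucker edge that flips coordinate $i$, one walks through adjacent subdivided simplices whose witnesses use progressively higher-priority coordinates, and the priority ordering together with the antipodal $\mathbb{Z}_2$-action on $\Gamma'$ must be coordinated so that the chain cannot loop or exit to the boundary without depositing a simplex of $\Gamma$ on which all $k$ coordinates simultaneously take both signs. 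Verifying that this bookkeeping genuinely terminates in a neutral simplex, rather than merely cycling through partial neutralities, is where the combinatorial heart of the argument should lie.
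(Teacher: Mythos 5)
The paper does not actually prove this statement: it is quoted from \cite{grant2013geometric}, with a pointer to two known strategies (a topological one via Borsuk--Ulam, and a combinatorial one extending Ky Fan's framework). Your reduction to the plain Tucker's Lemma (\Cref{thm: tucker}) via barycentric subdivision is therefore a genuinely different route, and as far as I can tell it is correct --- indeed it is already \emph{complete} by the end of your second paragraph, and the ``main obstacle'' you describe in the third paragraph does not exist. The argument is a proof by contradiction: assuming no neutral simplex, every $\sigma\in\Gamma$ has a nonempty set $M(\sigma)$ of coordinates on which $\ell$ is monochromatic over $V(\sigma)$; you label the barycenter of $\sigma$ by $s_\sigma\cdot i_\sigma$, where $i_\sigma=\min M(\sigma)$ and $s_\sigma$ is the constant value of $\ell_{i_\sigma}$ on $V(\sigma)$ (note the sign is determined by $\ell$, not ``chosen''). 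Boundary antipodality of $\ell'$ holds because $M(-\sigma)=M(\sigma)$ with the constant values negated. Now every edge of the barycentric subdivision joins the barycenters of a simplex $\sigma$ and a proper face $\tau$ of it; since $V(\tau)\subseteq V(\sigma)$, any coordinate monochromatic on $\sigma$ is monochromatic on $\tau$ \emph{with the same sign}, so a complementary edge (which forces $i_\tau=i_\sigma$ and $s_\tau=-s_\sigma$) is outright impossible, contradicting \Cref{thm: tucker}. There is no need to ``descend'' from the complementary edge to a neutral simplex constructively, and no Ky Fan chain argument is required --- the contradiction alone refutes the assumption. Two points you should make explicit: first, Tucker's Lemma with $k$ label pairs in $\mathbb{R}^d$ requires $k\le d$, and so does your argument; this hypothesis is absent from the theorem as stated but is necessary (the statement fails for $k>d$, e.g. $d=1$, $k=2$) and holds in the paper's application, where $k=d$. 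Second, you should verify that the barycentric subdivision of a triangulation that is antipodally symmetric on the boundary is again antipodally symmetric on the boundary; this is immediate because barycenters commute with the map $x\mapsto -x$.
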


\Cref{thm: tuckercube} can be proved by either applying Borsuk-Ulam Theorem \cite{matouvsek2003using} (a topology approach) or by extending the framework of Ky Fan \cite{fan1952generalization} to more possible labelings \cite{grant2013geometric} (a combinatorial approach).
We refer the interested readers to \cite{grant2013geometric} for a survey of relevant approaches.

\section{Proof of \Cref{thm: main}}
\label{sec: tree cover}

In this section, we provide the proof of \Cref{thm: main}.
We will construct a metric $(X,d_X)$, and show that for any $k$ dominating trees $T_1,\ldots,T_k$ on $X$, there is a pair of vertices $u,v$ in $X$, such that 
\[\min_{1\le i\le k}\set{\dist_{T_i}(u,v)}=\Omega\bigg(\frac{n^{\frac{1}{2^{k-1}}}}{2^{k-1}}\bigg)\cdot d_X(u,v).\]

\subsection{Construction of the Hard Instance}
% In this section, we will describe the construction of our hard instance for tree cover.%, which gives rise to a lower bound of $\Omega(n^{\frac{1}{2^{k-1}}})$.

% Let $e_i$ denote the $i^{\mathrm{th}}$ standard basis vector in $\mathbb{R}^d$. We need the following definition of an octahedron.
% \begin{definition}[$d$-dimensional $r$-sized Octahedron]
% For each $d \geq 1$, the \emph{$d$-dimensional unit octahedron} $\Diamond^d$ is defined as $\conv\{\pm e_i\}$, or equivalently, is the unit ball $\{x\in R^d, ||x||_1\le 1\}$ in the $\ell_1$ norm. Generally, the \emph{$d$-dimensional $r$-sized octahedron} $\Diamond^d_r$ is defined as $\conv\{\pm r\cdot e_i\}$. %This is also known as the \emph{cross-polytope} or \emph{orthoplex}.
% \end{definition}

%For covering an $n$-vertex graph by $k$ trees ($n>>k$), we define 

We first define an edge-weighted graph $\tilde{G}$ as follows.
    \begin{itemize}
        \item Vertex set: $V(\tilde{G})=\{x\in\mathbb{Z}^d:||x||_1\leq r\}$;
        \item Edges set: $E(\tilde{G})=E_0\cup \Esp$, where edges in $\Esp$ are called \emph{special edges}, and
        \begin{itemize}
            \item $E_0=\{(u,v):||u-v||_1=1\}$,
            \item $\Esp=\{(u,v):u=-v,||u||_1=r\}$;
        \end{itemize}
        
        % $E_0=\{(u,v):||u-v||_1=1\}$ and $\Esp=\{(u,v):u=-v,||u||_1=r\}$;
        \item Edge weights: $w_{e}=
        \begin{cases}
            1, \textnormal{ if } e\in E_0;\\
            0, \textnormal{ if } e\in \Esp.\\
        \end{cases}$
    \end{itemize}
% We denote the boundary of $\tilde{G}$ as $\partial \tilde{G}=\{x\in\mathbb{Z}^d:||x||_1= r\}$. 
Our hard instance $G$ is the graph obtained by contracting all the edges in $\Esp$ within $\tilde{G}$. Specifically, we denote the two vertices contracted by the one with positive value at the first non-zero coordinate.
% $\tilde{G}^k_n$ while considering vertex pair $u,v$ as the same vertex if $u=-v$ and $u\in\partial \tilde{G}^k_n$. Additionally, we also denote $\partial{G}^k_n=\partial \tilde{G}^k_n$.
By choosing $d=2^{k-1}$ and $r=\Theta(n^{1/d})$, we can ensure that $|V(G)|=O(n)$.
%\xhy{$|V(G^k_n)|\leq2^d\binom{d}{r+d}\approx r^{d}$} 
The metric $(X,d_X)$ is simply the shortest path distance metric on all vertices of $G$.

%We call the vertex corresponding to vector $\mathbf{0}$ as the center of $G$ and $\tilde{G}$, and denote it as $c$.

% Note that it can be seen as a (triangulation by $\Diamond^d_r$) by ...\xhy{todo}

\subsection{Reduction to Tucker's Lemma}

\begin{lemma}
There is a triangulation $\Gamma'$ of $\{x\in\mathbb{R}^d:||x||_1\leq r\}$, such that $V(\Gamma')=V(\tilde{G})$, and for every edge $(u,v)$ in $\Gamma'$, $||u-v||_{\infty}\leq1$.
\end{lemma}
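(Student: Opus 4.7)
The plan is to construct $\Gamma'$ by decomposing $B_r$ into unit-cube pieces and triangulating each piece separately. For each $v \in \mathbb{Z}^d$, consider the unit cube $C_v = v + [0,1]^d$ and its intersection with $B_r$, which is either empty, the full cube $C_v$ (when $C_v \subseteq B_r$), or a ``truncated cube'' cut by the boundary of $B_r$. For full cubes, I would use the standard Freudenthal triangulation: each $C_v$ is decomposed into $d!$ simplices indexed by permutations of $\{1, \ldots, d\}$, with each simplex contained in $C_v$ and hence having $\ell_\infty$-edge-length at most $1$.

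For truncated cubes, I would apply the \emph{pulling triangulation} from the cube's innermost corner $v_*$ (the corner of $C_v$ with minimum $\|\cdot\|_1$): recursively triangulate each facet of $C_v \cap B_r$ not containing $v_*$, then cone each sub-simplex over $v_*$. Since $v_*$ has $\ell_\infty$-distance at most $1$ from every other lattice point in $C_v$, the edges incident to $v_*$ are short; edges not incident to $v_*$ (i.e., edges of the recursively triangulated facets) remain short by induction on dimension.

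The main technical obstacle is ensuring compatibility on shared $(d-1)$-faces between adjacent cubes. I would resolve this by noting that the Freudenthal triangulation at corner $v$ of a full cube coincides with the pulling triangulation at $v$ under a natural recursive convention; thus both full and truncated cubes use the same scheme (pulling from the cube's innermost corner). Two adjacent cubes always share the same innermost lattice point on their common face, so both sides induce the same pulling triangulation of that face. Combining the piece-wise triangulations yields the global $\Gamma'$, with $V(\Gamma') = V(\tilde{G})$ and each simplex contained in a unit cube, so every edge of $\Gamma'$ has $\ell_\infty$-length at most $1$.
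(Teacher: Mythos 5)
Your construction takes a genuinely different route from the paper's. The paper decomposes the radius-$r$ $\ell_1$-ball into translates of the unit $\ell_1$-ball and triangulates each cross-polytope into its $2^d$ orthant simplices spanned by the origin and the signed basis vectors; you decompose into unit lattice cubes (truncated by the sphere $\|x\|_1=r$) and triangulate each piece by Kuhn/pulling triangulations. Your cubical route has a concrete advantage worth making explicit: each unit cube $C_v=v+[0,1]^d$ lies in a single closed orthant, so $C_v\cap B_r$ is the cube cut by the single hyperplane $\sum_i\epsilon_i x_i=r$ with $\pm1$ coefficients, and since that linear form changes by exactly $1$ along each cube edge, the hyperplane meets cube edges only at lattice points. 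Hence the truncated pieces are lattice polytopes and no new vertices are created --- this is precisely what guarantees $V(\Gamma')=V(\tilde G)$, and you should state it rather than leave it implicit. In both approaches every simplex sits inside a unit cube, so all edges satisfy $\|u-v\|_\infty\le 1$.

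The one genuine gap is in the compatibility step, specifically on the boundary sphere. Your rule ``pull from the $\ell_1$-minimal vertex'' selects a unique vertex on every face of a cube \emph{not} contained in the sphere, and for such faces your matching claim does work (if a cube's pulled corner lies on a shared face, it is also the $\ell_1$-minimal vertex of that face, so both sides induce the same pulling triangulation, recursively). But each truncated piece also has the facet $C_v\cap\{\sum_i\epsilon_i x_i=r\}$, a hypersimplex slice of the cube all of whose vertices have $\ell_1$-norm exactly $r$; there the ``innermost corner'' rule selects nothing, and for $d\ge 4$ (i.e.\ $k\ge 3$) these facets are not themselves simplices, while adjacent cubes' spherical facets share lower-dimensional faces whose triangulations must still agree. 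The clean repair is to define every triangulation as the pulling triangulation with respect to one global linear order on $\mathbb{Z}^d\cap B_r$ refining the order by $\|\cdot\|_1$ (ties broken, say, lexicographically); then the triangulation induced on any face is intrinsic to that face, compatibility holds automatically on all shared faces (not only shared facets), and your identification of the full-cube case with the Kuhn triangulation is unaffected. With that modification the argument is complete.
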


\begin{proof}

Let $\set{e_1,\ldots,e_d}$ be the standard basis vectors in $\mathbb{R}^d$.
Given a simplex $\sigma$, we denote by $\mathcal{C}(\sigma)$ the set of all faces of $\sigma$, i.e., $\mathcal{C}(\sigma)=\bigcup_{A\subseteq V(\sigma)}\conv(A)$. For each vertex $v=\sum_{i=1}^{d}\alpha_ie_i\in\{x\in\mathbb{R}^d:||x||_1\leq 1\}$ where $\alpha_i\ne0$ for all $i$, we define %$\sigma(v)=\conv(\{e_1,-e_1\}\cup\{\mathrm{sign}(\alpha_i)\cdot e_i:i=2,\ldots,d\})$.
$\sigma(v)=\conv(\{\mathrm{sign}(\alpha_i)\cdot e_i:i=1,\ldots,d\})$ (see \Cref{fig: l1normball_triangulation}).
Specially, we define $\sigma(v)=\emptyset$ if $\alpha_i=0$ for some $i$.
We further define $\Gamma=\bigcup_{||v||\leq 1}\mathcal{C}(\sigma(v))$, as illustrated in \Cref{fig: l1normball_triangulation}.

\begin{figure}[h]
	\centering
	\subfigure[A $2$-dimensional $\ell_1$-ball can be naturally partitioned into unit  balls.]
	{\scalebox{0.126}{\includegraphics{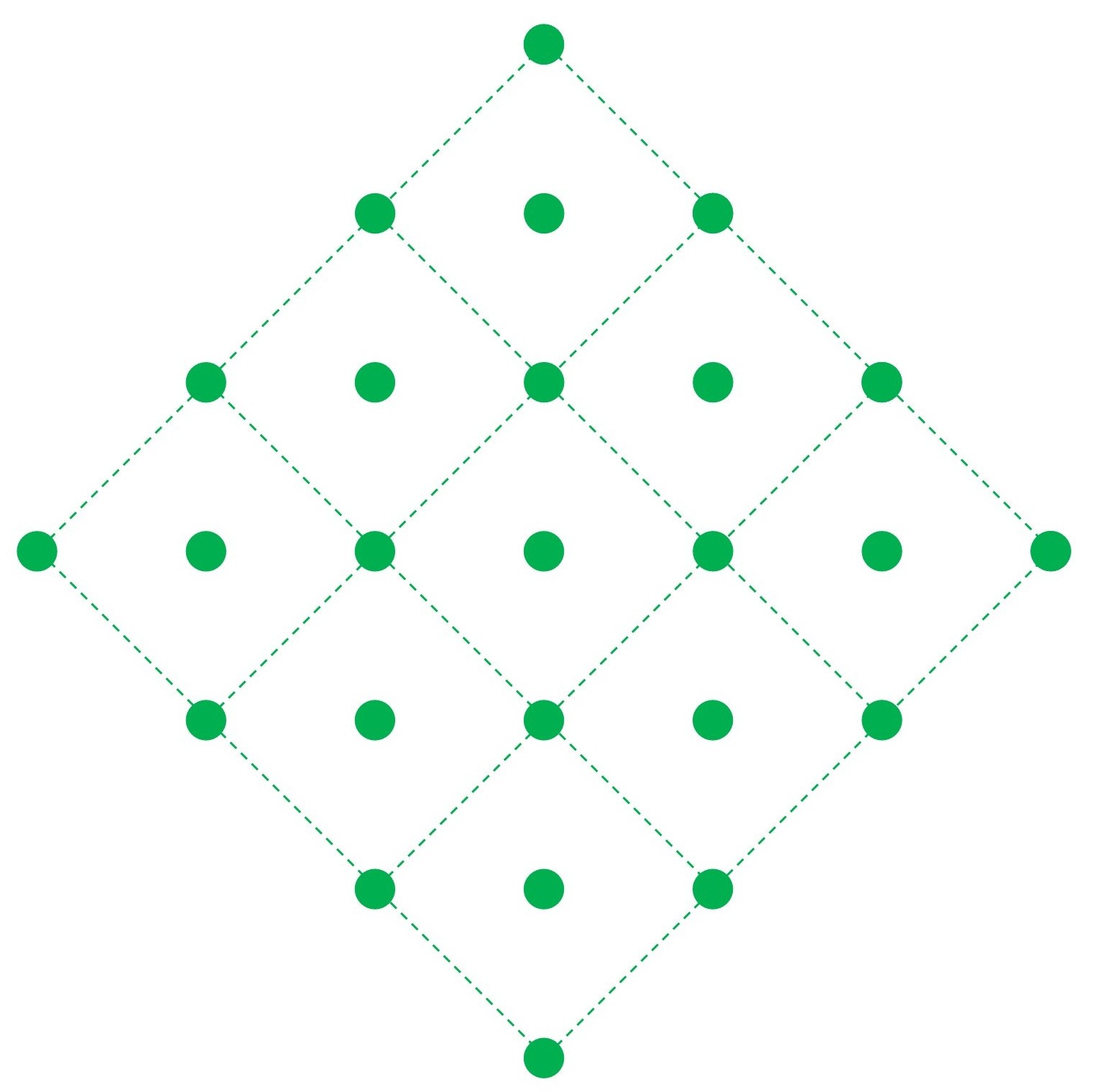}}}
	\hspace{0.2cm}
	\subfigure[Triangulation of the $2$-dimensional $\ell_1$-norm unit ball. $v$ lies in $\sigma(v)$, the simplex with red vertices.]
	{
		\scalebox{0.31}{\includegraphics{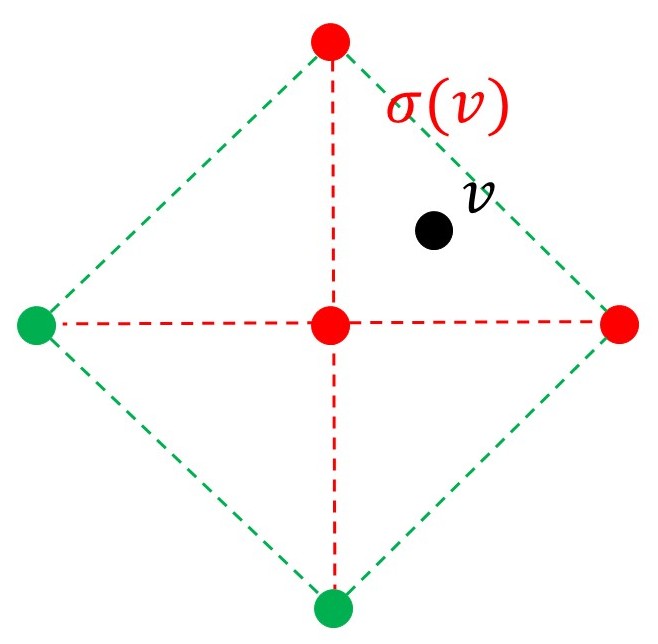}}}
        \hspace{0.2cm}
        \subfigure[Triangulation of the $3$-dimensional $\ell_1$-norm unit ball. $v$ lies in $\sigma(v)$, the simplex with red vertices.]
	{
		\scalebox{0.3}{\includegraphics{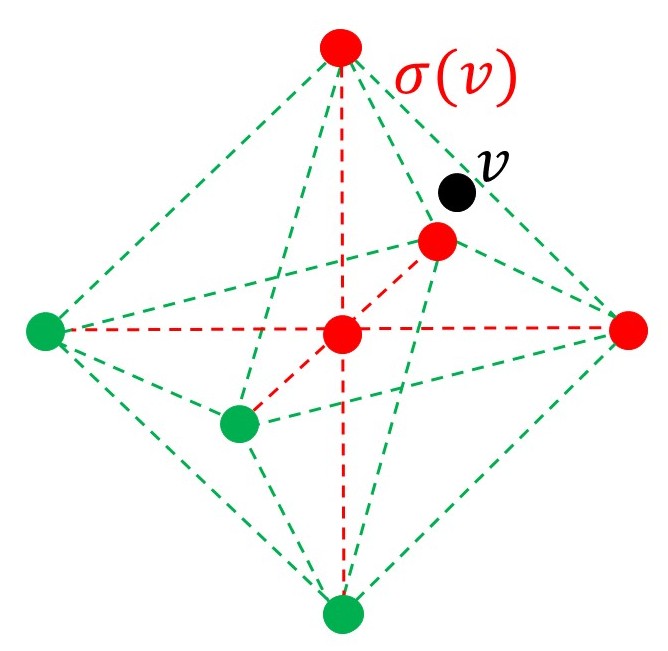}}}
	\caption{The $\ell_1$ norm ball grid and its decomposition, and the triangulations of $\ell_1$-norm unit balls in $2$ and $3$ dimensions.\label{fig: l1normball_triangulation}}
\end{figure}

For every vertex $v\in\{x\in\mathbb{R}^d:||x||_1\leq 1\}$, if $\alpha_i\ne0$ for all $i$, by $v\in\sigma(v)$, we have $v\in\bigcup_{\sigma\in\Gamma}\sigma$. Otherwise, $v\in\conv(\{\mathrm{sign}(\alpha_i)\cdot e_i:i=1,\ldots,d\})\in\Gamma$, where $\mathrm{sign}(x)=\begin{cases}
        1,&\textnormal{if } x\geq0;\\
        -1,&\textnormal{if } x<0.
    \end{cases}$. Thus $v\in\bigcup_{\sigma\in\Gamma}\sigma$ for every $v\in\{x\in\mathbb{R}^d:||x||_1\leq 1\}$. It follows that $\bigcup_{\sigma\in\Gamma}\sigma=\{x\in\mathbb{R}^d:||x||_1\leq 1\}$.
By the definition of $\Gamma$, for each simplex $\sigma \in \Gamma$, all the faces of $\sigma$ are also simplices in $\Gamma$. 
Furthermore, for every two simplices $\sigma_1,\sigma_2\in\Gamma$, $\sigma_1\cap\sigma_2=\conv(V(\sigma_1)\cap V(\sigma_2))\in\mathcal{C}(\sigma_1)\cap\mathcal{C}(\sigma_2)$.
This concludes that $\Gamma$ is a triangulation of $\{x\in\mathbb{R}^d:||x||_1\leq 1\}$.

$\{x\in\mathbb{R}^d:||x||_1\leq r\}$ can be naturally partitioned into $\ell_1$-norm unit balls, each of which can be triangulated in the same manner as $\Gamma$, as illustrated in \Cref{fig: l1normball_triangulation}. It gives a triangulation of $\{x\in\mathbb{R}^d:||x||_1\leq r\}$, denoted by $\Gamma'$. Note that $V(\Gamma')=V(\tilde{G})$, and for every edge $(u,v)$ in $\Gamma'$, $||u-v||_{\infty}\leq1$.
\end{proof}

Given a dominating tree $T$ of $G$. As in \Cref{sec: case k=1}, we may assume without loss of generality that for each edge $(u,v)$ in $T$, its length is the distance between  $u$ and $v$ in  $G$, namely $\dist_T(u,v)=\dist_G(u,v)$.
For each edge $(u,v)$ in $T$, we subdivide it by $\dist_T(u,v)-1$ \emph{interpolating nodes}, which are then labeled by consecutive vertices in the shortest path between $u,v$ in $G$ (arbitrarily chosen if there are more than one shortest path). Edges connecting the neighboring interpolating nodes are called \emph{interpolating edges}.

We define $\Pi(u,v)$ as the sequence of all the above nodes from $u$ to $v$. Note that an interpolating edge may not be an edge in $\tilde{G}$. For each such edge $(x,y)$, observe that either $||x||_1=r$ or $||y||_1=r$. We may assume without loss of generality that $||x||_1=r$. One can observe that $(-x,y)$ is included in the edge set of $\tilde{G}$. Thus we subdivide $(x,y)$ by $(x,-x)$ and $(-x,y)$. By the above subdivision process, we obtain a new sequence consisting of edges in $\tilde
{G}$, and we denote it by $\tilde{\Pi}(u,v)$.

For vertices $v,v'\in G$, let $\big(v_0(=v),v_1,\ldots,v_{\ell}(=v')\big)$ be the $v$-$v'$ path in $T$. We define $P_{\tilde{G}}^T(v,v')$ as the sequential concatenation of sequences $\tilde{\Pi}(v_0,v_1),\tilde{\Pi}(v_1,v_2),\ldots,\tilde{\Pi}(v_{\ell-1},v_{\ell})$, identifying, for each $i$, the last node of $\tilde{\Pi}(v_{i-1},v_{i})$ and the first node of $\tilde{\Pi}(v_{i},v_{i+1})$. 

Now for a vertex pair $v,v'\in \tilde{G}$, if $v,v'$ are also in $G$, the definition of $P_{\tilde{G}}^T(v,v')$ is the same as before. Otherwise, we may assume $v'$ is in $\tilde{G}\setminus G$, namely, $||v'||=r$ with negative value at the first non-zero coordinate. If $P_{\tilde{G}}^T(v,-v')$ is ended by $(v',-v')$, then we truncate it and obtain a sequence ended by $v'$. Otherwise, we concatenate it by $(-v',v')$. The new sequence is defined as $P_{\tilde{G}}^T(v,v')$.

Denote the length of a path $P$ as $\ell(P)$. Observe that the tree-distance between vertices $v,v'$ is $\ell(P_{\tilde{G}}^T(v,v'))$ for each vertex pair $v,v'$ in $G$.

% for any edge $(u,v)$ in $T$, we can define $P_G(u,v)$ as an arbitrary shortest path between $u$ and $v$ in graph $G$. If $P_G(u,v)$ passes through a vertex $v'$ such that $||v'||_1=r$, then we expand it to $(v',-v')$ and obtained a path $P_{\tilde{G}}(u,v)$. Specifically, we expand $v_{i-1}v_i(=v')v_{i+1}$ to $v_{i-1}v'(-v')v_{i+1}$ if $v_{i-1}\approx-v_{i+1}$. \xhy{expansion method should be specified.} Denote the length of a path $p$ as $\ell(P)$. Then $\ell((u,v)_T)=\ell(P_G(u,v))=\ell(P_{\tilde{G}}(u,v))$.

% Given a tree $T$ and a vertex $v\in V(\tilde{G})$, we assume that the corresponding vertex in $G$ is $v'$. We denote $c$ as the center of $G$ and $\tilde{G}$, and the path from $c$ to $v'$ in $T$ as $P_T(c,v')$. Assuming that $P_T(c,v')=v_0(=c)v_1\ldots v_t(=v')$, the corresponding path in $\tilde{G}$ as $P_{\tilde{G}}^T(c,v')=\cup_{(v_i,v_{i+1})\in P_T(c,v')}P_{\tilde{G}}(v_i,v_{i+1})$. If $v'=-v$, we define $P_{\tilde{G}}^T(c,v)=P_{\tilde{G}}^T(c,v')\cup\{(v',v)\}$. Define $P_{\tilde{G}}^T(u,v)=P_{\tilde{G}}^T(u,c)\cup P_{\tilde{G}}^T(c,v)$ if $u$ and $v$ are in different subtrees in the tree rooted at $c$, and $P_{\tilde{G}}^T(u,v)=P_{\tilde{G}}^T(c,v)\setminus P_{\tilde{G}}^T(c,u)$ if $u$ is the ascendant of $v$. Note that $\ell(P_T(u,v))=\ell(P_{\tilde{G}}^T(u,v))$.

\begin{definition}[Tree Index]
    Given a tree $T$ and a vertex $v\in V(\tilde{G})$. 
    Denote by $c$ the vertex corresponding to the vector $\vec{0}$.
    We define an index $i_T(v)$ of $v$ for $T$ as:
    \[i_T(v)=
    % \sum_{(i_1,i_2)\in p(c,v)}\mathbf{1}_{i_1=-i_2\texttt{ and }i_1\in\partial G^k_n}\mod2
    % |P_{\tilde{G}}^T(c,v)\cap \Esp|\mod2
    \begin{cases}
        1, & P_{\tilde{G}}^T(c,v) \text{ contains an odd number of special edges};\\
        -1, & P_{\tilde{G}}^T(c,v) \text{ contains an even number of special edges.}
    \end{cases}
    \]
For the trees $T_1,\ldots,T_k$ of the tree cover, we define the \emph{tree index} of $v$ as $i(v)=(i_{T_1}(v),\ldots,i_{T_k}(v))$.
\end{definition}

% Abusing the notation, we sometimes consider $i(v)$ as a number equals to $\sum_{j=1}^ki_{T_j}(v)\cdot2^{j-1}$. Using tree indices to label vertices in $V(G^k_n)$ gives rise to a labeling of $\Gamma'$.

By definition, $i(v)=-i(-v)$ for all $||v||_1=r$. Note that tree indices on $V(\tilde{G})$ is a labeling $V(\Gamma')\to \set{+1,-1}^k$ of 
the triangulation $\Gamma'$.
We now show that neighboring vertices with opposite indices have large tree distances.

\begin{lemma}\label{lem: complemenntary edge implies far}
For a tree $T$ and vertices $u,v$ with $||u-v||_{\infty}\leq1$, if $i_{T}(u)= -i_{T}(v)$, then $\dist_{T}(u,v)\geq 2r-d$.
\end{lemma}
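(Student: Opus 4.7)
The plan is to analyze the walk $P_{\tilde{G}}^T(u,v)$, whose length equals $\dist_T(u,v)$, and to exploit the parity constraint imposed by $i_T(u) = -i_T(v)$. The first step is to show that $P_{\tilde{G}}^T(u,v)$ contains an odd number $s$ of special edges. To see this, concatenate $P_{\tilde{G}}^T(c,u)$ with $P_{\tilde{G}}^T(u,v)$: the resulting walk traces the tree $T$ from $c$ to $v$ via $u$, and differs from $P_{\tilde{G}}^T(c,v)$ only on the portion from the lowest common ancestor of $u$ and $v$ down to $u$, which is traversed in both directions. Special edges on that retraced portion are counted twice and therefore contribute an even amount. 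Consequently the parity of special edges in $P_{\tilde{G}}^T(u,v)$ equals the parity of the sum of the analogous counts in $P_{\tilde{G}}^T(c,u)$ and $P_{\tilde{G}}^T(c,v)$; since $i_T(u) = -i_T(v)$, this parity is odd.

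Next, I would decompose $P_{\tilde{G}}^T(u,v)$ into $s+1$ maximal $E_0$-subwalks $W_0, W_1, \ldots, W_s$ separated by the $s$ special edges $(a_j,-a_j)$ with $\|a_j\|_1 = r$. Since every $E_0$-edge changes a single coordinate by $\pm 1$, the length of each $W_j$ is at least the $\ell_1$-displacement between its endpoints, yielding
\[
\dist_T(u,v) \;\geq\; \|u-a_1\|_1 + \sum_{j=1}^{s-1}\|a_j+a_{j+1}\|_1 + \|a_s+v\|_1.
\]

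The main technical step is to telescope this sum by iterated triangle inequality. Combining the first two summands gives $\|u-a_1\|_1+\|a_1+a_2\|_1 \geq \|u+a_2\|_1$ (choosing the direction of the triangle inequality that cancels $a_1$); absorbing the next term then produces $\|u-a_3\|_1$, and so on, with the sign in front of the surviving $a_k$ alternating as $(-1)^k$. Because $s$ is odd, after using all $s-1$ intermediate summands we arrive at $\|u-a_s\|_1$, and combining with the final term yields
\[
\|u-a_s\|_1 + \|v+a_s\|_1 \;\geq\; \|u-v-2a_s\|_1 \;\geq\; 2\|a_s\|_1 - \|u-v\|_1 \;=\; 2r - \|u-v\|_1.
\]
The hypothesis $\|u-v\|_\infty\leq 1$ gives $\|u-v\|_1\leq d$, and the desired bound $\dist_T(u,v)\geq 2r-d$ follows.

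The main subtlety is ensuring that the telescoping lands on the correct sign at the last step, namely $-a_s$, so that the final combine produces $\|u-v-2a_s\|_1\geq 2r-\|u-v\|_1$ via the reverse triangle inequality. This is precisely where the odd parity of $s$ enters the argument: were $s$ even, the analogous collapse would give only $\|u+a_s\|_1+\|v+a_s\|_1\geq\|u-v\|_1$, which is vacuous against a target of $2r - d$.
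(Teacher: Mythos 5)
Your proposal is correct, and both halves of the argument track the paper's proof closely: the parity of special edges on $P_{\tilde{G}}^T(u,v)$ is derived from the tree indices exactly as in the paper (the paper phrases it via rooting $T$ at $c$ and casing on whether one of $u,v$ is an ancestor of the other, which is the same LCA cancellation you describe), and the quantitative bound rests on the same two facts, namely that $E_0$-segments have length at least the $\ell_1$-displacement of their endpoints and that an odd number of antipodal jumps forces a detour of length $2r-\|u-v\|_1$. The execution of the second half differs slightly. The paper splits $P_{\tilde{G}}^T(u,v)$ at a single well-chosen special edge $(w,-w)$ so that both halves contain an even number of special edges, and then invokes \Cref{lem: odd times transportation far than straight line distance}, which handles the even-parity segments by alternately negating the pieces between special edges to produce an equal-length $E_0$-path; a single triangle inequality $\|u-w\|_1+\|{-}w-v\|_1\ge 2r-\|u-v\|_1$ then finishes. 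You instead decompose at every special edge and telescope the triangle inequality directly across all $s+1$ segments, tracking the alternating sign $(-1)^k$ on the surviving $a_k$; the odd parity of $s$ enters at the last combine, exactly where the paper uses it to guarantee the split into two even halves exists. Your version dispenses with the auxiliary claim and with the paper's somewhat informal ``without loss of generality we can find such a $w$'' step, at the cost of bookkeeping the signs explicitly; the flipping construction in the paper's claim is really the same telescoping carried out geometrically. Both are valid; neither buys additional generality.
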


The proof of \Cref{lem: complemenntary edge implies far} requires the following claim.

\begin{claim}
\label{lem: odd times transportation far than straight line distance}
    Given a vertex pair $u,v\in V(\tilde{G})$ and a path $P$ between $u$ and $v$ in $\tilde{G}$. If $|E(P)\cap \Esp|$ is an even number, then $\ell(P)\geq||u-v||_1$.
\end{claim}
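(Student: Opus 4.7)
The plan is to build an ``unfolded'' path in $\mathbb{R}^d$ whose $\ell_1$-length is exactly the number of weight-$1$ edges used by $P$, and then close the argument with the $\ell_1$ triangle inequality. The parity hypothesis on $|E(P)\cap\Esp|$ will be exactly what is needed to ensure the unfolded path has the right endpoints.

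Concretely, write $P$ as a vertex sequence $u=w_0,w_1,\dots,w_N=v$. For each $i$, define a sign $\epsilon_i\in\{+1,-1\}$ by setting $\epsilon_0=+1$ and, inductively, $\epsilon_i=-\epsilon_{i-1}$ if the edge $(w_{i-1},w_i)$ lies in $\Esp$, and $\epsilon_i=\epsilon_{i-1}$ otherwise. Then set $\tilde w_i=\epsilon_i\cdot w_i$. The main step is a two-case check of each consecutive pair $(\tilde w_{i-1},\tilde w_i)$:

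\begin{itemize}
\item If $(w_{i-1},w_i)\in E_0$, then $w_i=w_{i-1}\pm e_j$ for some standard basis vector $e_j$, and $\epsilon_i=\epsilon_{i-1}$, so $\tilde w_i=\tilde w_{i-1}\pm\epsilon_{i-1}e_j$, meaning $\|\tilde w_i-\tilde w_{i-1}\|_1=1$.
\item If $(w_{i-1},w_i)\in\Esp$, then $w_i=-w_{i-1}$ and $\epsilon_i=-\epsilon_{i-1}$, so $\tilde w_i=-\epsilon_{i-1}(-w_{i-1})=\tilde w_{i-1}$, a trivial step.
\end{itemize}

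From these two observations, the total $\ell_1$-length of the unfolded walk $\tilde w_0,\dots,\tilde w_N$ equals $|E(P)\cap E_0|=\ell(P)$, since $\Esp$ edges contribute $0$ and $E_0$ edges contribute $1$. Now I use the parity assumption: because $|E(P)\cap \Esp|$ is even, $\epsilon_N=(-1)^{|E(P)\cap\Esp|}=+1$, so $\tilde w_N=v$ while $\tilde w_0=u$. The $\ell_1$ triangle inequality then gives
\[
\ell(P)=\sum_{i=1}^{N}\|\tilde w_i-\tilde w_{i-1}\|_1\ \geq\ \|\tilde w_N-\tilde w_0\|_1\ =\ \|v-u\|_1,
\]
which is the desired inequality.

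The only real obstacle is the careful bookkeeping in the two-case check above; once the sign function $\epsilon_i$ is introduced, the $E_0$ edges ``straighten out'' into unit $\ell_1$ steps and the special edges become null steps, and the parity hypothesis is used in precisely one place, namely to guarantee $\epsilon_N=+1$. If the parity were odd, the same argument would instead yield $\ell(P)\geq\|u+v\|_1$, explaining why the hypothesis cannot be dropped.
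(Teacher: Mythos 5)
Your proof is correct and takes essentially the same route as the paper's: the paper also constructs an ``alternately flipped'' path $\tilde P$ by negating the segments between consecutive special edges (your $\epsilon_i$ bookkeeping makes this flip explicit edge by edge), observes that the flipped walk uses only unit-weight steps, and uses the even parity to conclude the flipped endpoint is $v$ rather than $-v$ before applying the $\ell_1$ triangle inequality. No gaps.
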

\begin{proof}
    \begin{figure}[h]
	\centering
        \scalebox{0.16}{
        \includegraphics{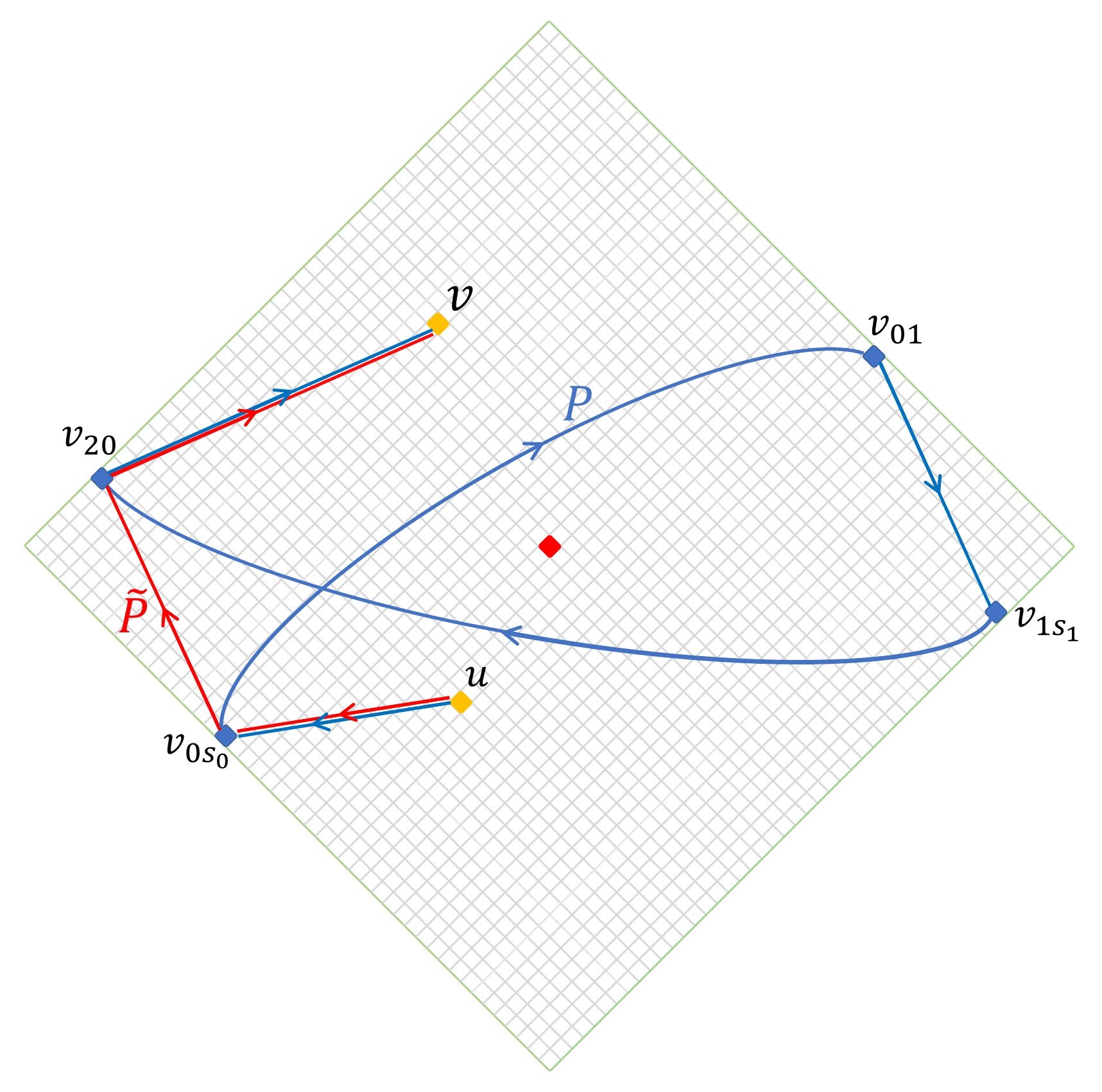}}
	\caption{Construction of $\tilde{P}$ by alternately flipping $P$. The two paths have the same length.}
\end{figure}
    Without loss of generality, we assume \[P=\big(v_{00}(=u),v_{0 1},\ldots, v_{0s_0},v_{10},\ldots, v_{1s_1},v_{20},\ldots, v_{(t-1)s_{t-1}},v_{t0},\ldots, v_{ts_t}(=v)\big),\] where for each $j=0,\ldots,t-1$, $v_{js_j}=-v_{(j+1)0}$ and $||v_{js_j}||_1=r$. Then we construct \[\tilde{P}=\big(v_{00},v_{0 1},\ldots, v_{0s_0},-v_{10},\ldots ,-v_{1s_1},v_{20},\ldots, (-1)^{t-1}v_{(t-1)s_{t-1}},(-1)^tv_{t0},\ldots, (-1)^tv_{ts_t}\big).\]

    It can be observed that $\tilde{P}$ is a path in $\tilde{G}\setminus\Esp$, namely containing no edges in $\Esp$. Since $t=|P\cap \Esp|$, $2\mid t$, and so $(-1)^tv_{ts_t}=v_{ts_t}=v$. Therefore, $\ell(P)=\ell(\tilde{P})\geq||u-(-1)^tv_{ts_t}||_1=||u-v||_1$.
\end{proof}

\begin{proof}[Proof of \Cref{lem: complemenntary edge implies far}]
Since $i_T(u)=-i_T(v)$, without loss of generality we can assume that $2\nmid |P_{\tilde{G}}^T(c,u)\cap\Esp|$ and $2\mid |P_{\tilde{G}}^T(c,v)\cap\Esp|$. We consider the tree $T$ rooted at $c$. If $u$ and $v$ are in different subtrees, $P_{\tilde{G}}^T(u,v)$ is the concatenation of $P_{\tilde{G}}^T(u,c)$ and $P_{\tilde{G}}^T(c,v)$. Otherwise, we may assume $u$ is an ancestor of $v$, then $P_{\tilde{G}}^T(c,u)$ is a subsequence of $P_{\tilde{G}}^T(c,v)$. In both cases, 
\[|P_{\tilde{G}}^T(u,v)\cap\Esp|\equiv|P_{\tilde{G}}^T(c,u)\cap\Esp|+|P_{\tilde{G}}^T(c,v)\cap\Esp|\mod2,\] and so $2\nmid |P_{\tilde{G}}^T(u,v)\cap\Esp|$.
    
We can assume without loss of generality that $P_{\tilde{G}}^T(u,v)$ consecutively pass through the vertex $w$ and $-w$ with $||w||_1=r$, such that the first part split from $P_{\tilde{G}}^T(u,v)$ (denoted by $P(u,w)$) contains no vertex pair $i_1,i_2$ with $i_1=-i_2$ and $||i_1||_1=r$. The other part split from $P_{\tilde{G}}^T(u,v)$ is denoted as $P(-w,v)$. It follows that $2\mid |P(u,w)\cap\Esp|$ and $2\mid |P(-w,v)\cap\Esp|$. By \Cref{lem: odd times transportation far than straight line distance}, we have $\ell(P(u,w))\geq||u-w||_1$ and $\ell(P(-w,v))\geq||-w-v||_1$. Therefore, 
\[
\begin{split}
\dist_{T}(u,v) & =\ell(P_{\tilde{G}}^{T}(u,v))=\ell(P(u,w))+\ell(P(-w,v))\\
& \geq||u-w||_1+||-w-v||_1\geq||w-(-w)||_1-||u-v||_1\geq2r-d\cdot||u-v||_{\infty}\geq 2r-d.
\end{split}\]
\end{proof}

Observe that \Cref{thm: tucker} requires only 
$d$ pairs of opposite labels, rather than $\{-d,\ldots,-2,-1,1,2,\ldots,d\}$. As we mentioned before, $i(v)=-i(-v)$ for all vertices $v$ with $||v||_1=r$. Thus, by \Cref{thm: tucker}, the triangulation $\Gamma'$ contains a complementary edge. That is, there exists a pair of neighboring vertices $u,v\in \Gamma'$ such that $i(u)=-i(v)$ and $||u-v||_{\infty}\leq 1$.

From \Cref{lem: complemenntary edge implies far}, for each tree $T_j$ in the tree cover $\set{T_1,\ldots,T_k}$, $\dist_{T_j}(u,v)\geq 2r-d$. Note that $\dist_{G}(u,v)\leq||u-v||_1\leq d\cdot||u-v||_{\infty}\leq d$. Therefore, 
\[\min_{1\le j\le k}\set{\dist_{T_j}(u,v)}=\Omega\bigg(\frac{r}{d}\bigg)\cdot \dist_{G}(u,v)=\Omega\bigg(\frac{n^{\frac{1}{2^{k-1}}}}{2^{k-1}}\bigg)\cdot \dist_{G}(u,v).\]

\section{Lower Bound for Ramsey Tree Covers}
\label{sec: ramsey}
Given a metric space $(X,d_X)$ on $|X|=n$ points, a \emph{Ramsey tree cover} is a collection $\tset=\set{T_1,\ldots,T_k}$ of (edge-weighted) trees on $X$, such that
\begin{itemize}
    \item for each $1\le i\le k$ and every pair $x,y$ of vertices in $X$, $\dist_{T_i}(x,y)\ge d_X(x,y)$ (that is, each $T_i$ is a \emph{dominating tree}); and
    \item for every vertex $x$ in $X$, there exists a tree $T_i\in \tset$ (called the \emph{home tree} of $x$), such that, for every vertex $y\in X$, $\dist_{T_i}(x,y)\le \alpha\cdot d_X(x,y)$. 
\end{itemize}
$k=|\tset|$ is called the \emph{size} of $\tset$ and $\alpha\ge 1$ is called the \emph{distortion} or \emph{stretch} of $\tset$.
%The integer $k=|\tset|$ is called the \emph{size} of $\tset$ and the real number $\alpha\ge 1$ is called the \emph{distortion} or \emph{stretch} of $\tset$.The goal is to construct Ramsey tree covers with small size and distortion.

In this section, we provide an alternative proof of the near-optimal bound of $\Omega_k(n^{1/k})$ for Ramsey tree covers of size $k$. We use a metric $(X,d_X)$ similar to the one introduced in \Cref{sec: tree cover}, and show that for every $k$ dominating trees $T_1,\ldots,T_k$ on $X$, there is a vertex $u$ in $X$, such that 
\[\min_{1\le i\le k}\set{\max_{v\in X\setminus\set{u}}\frac{\dist_{T_i}(u,v)}{d_X(u,v)}}=\Omega\bigg(\frac{n^{\frac{1}{k}}}{k}\bigg).\]

\subsection{Construction of the Hard Instance}
% In this section, we will describe the construction of our hard instance for tree cover.
We first define an edge-weighted graph $\tilde{G}$ as follows.
    \begin{itemize}
        \item Vertex set: $V(\tilde{G})=\{x\in\mathbb{Z}^d:||x||_{\infty}\leq r\}$;
        \item Edges set: $E(\tilde{G})=E_0\cup \Esp$, where edges in $\Esp$ are called \emph{special edges}, and
        \begin{itemize}
            \item $E_0=\{(u,v):||u-v||_1=1\}$,
            \item $\Esp=\{(u,v):u=-v,||u||_{\infty}=r\}$;
        \end{itemize}
        
        \item Edge weights: $w_{e}=
        \begin{cases}
            1, \textnormal{ if } e\in E_0;\\
            0, \textnormal{ if } e\in \Esp.\\
        \end{cases}$
    \end{itemize}
Our hard instance $G$ is the graph obtained by contracting all the edges in $\Esp$ within $\tilde{G}$. Specifically, we denote the two vertices contracted by the one with positive value at the first non-zero coordinate.
By choosing $d=k$ and $r=\Theta(n^{1/d})$, we can ensure that $|V(G)|=O(n)$.
%\xhy{$|V(G)|\approx r^{d}$} 
The metric $(X,d_X)$ is simply the shortest path distance metric on all vertices of $G$.

%We call the vertex corresponding to vector $\mathbf{0}$ as the center of $G$ and $\tilde{G}$, and denote it as $c$.

\subsection{Reduction to \Cref{thm: tuckercube}}
% Let $\set{e_1,\ldots,e_d}$ be the standard basis vectors in $\mathbb{R}^d$.

\begin{lemma}
There is a triangulation $\Gamma'$ of $\{x\in\mathbb{R}^d:||x||_{\infty}\leq r\}$, such that $V(\Gamma')=V(\tilde{G})$, and for every edge $(u,v)$ in $\Gamma'$, $||u-v||_{\infty}\leq1$.
\end{lemma}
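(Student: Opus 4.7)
The plan is to mimic the structure of the preceding lemma for the $\ell_1$-norm ball. The $\ell_\infty$-norm ball $\{x \in \mathbb{R}^d : ||x||_\infty \leq r\}$ is simply the cube $[-r,r]^d$, which admits a natural partition into $(2r)^d$ axis-aligned unit hypercubes with integer corners, one for each $z \in \mathbb{Z}^d$ satisfying $-r \leq z_i \leq r-1$. It therefore suffices to triangulate a single unit cube in a way that matches consistently on shared facets, and then assemble these triangulations.

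For each unit cube $C_z = z + [0,1]^d$, I would use the \emph{Kuhn (staircase) triangulation}: for every permutation $\pi$ of $\{1,\ldots,d\}$, introduce the $d$-simplex
\[\sigma_{z,\pi} = \conv\left\{z + \sum_{j=1}^{i} e_{\pi(j)} : i = 0,1,\ldots,d\right\},\]
where $e_1,\ldots,e_d$ are the standard basis vectors, and take $\Gamma'$ to be the family of all such $\sigma_{z,\pi}$ together with all of their faces. A standard argument shows that the $d!$ simplices $\{\sigma_{z,\pi}\}_{\pi}$ partition $C_z$: each point $x \in C_z$ lies in the simplex whose permutation $\pi$ sorts the fractional coordinates $x_i - z_i$ in decreasing order.

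The two required properties of $\Gamma'$ then follow easily. The vertex set $V(\Gamma')$ is precisely the set of corners of the unit cubes $C_z$, which equals $\{x \in \mathbb{Z}^d : ||x||_\infty \leq r\} = V(\tilde{G})$. For any edge $(u,v)$ of $\Gamma'$, the edge lies in some $\sigma_{z,\pi}$, so (up to exchanging $u$ and $v$) we have $v - u = \sum_{j=i_1+1}^{i_2} e_{\pi(j)}$ for some $0 \leq i_1 < i_2 \leq d$; this vector has every coordinate in $\{0,1\}$, yielding $||u-v||_{\infty} = 1$.

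The main obstacle is verifying that the triangulations on neighboring unit cubes glue consistently, so that $\Gamma'$ is genuinely a triangulation rather than merely a simplicial collection. The key fact is that the Kuhn triangulation of $[0,1]^d$, when restricted to any facet $\{x_i = 0\}$ or $\{x_i = 1\}$, coincides (up to coordinate relabeling) with the Kuhn triangulation of the $(d-1)$-cube. Consequently, when two adjacent unit cubes $C_z$ and $C_{z'}$ share a facet, the simplicial structures induced on that facet from either side agree, ensuring that pairwise intersections of simplices in $\Gamma'$ are faces of both. This is a classical property of Kuhn triangulations, and once invoked, the remainder is bookkeeping essentially identical to the preceding lemma.
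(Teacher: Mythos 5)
Your proof is correct, but it takes a genuinely different route from the paper. The paper partitions $[-r,r]^d$ into translates of the $\ell_\infty$-norm \emph{unit ball} (side-$2$ cubes) and triangulates each one by an inductive cone construction: the $(d-1)$-dimensional boundary faces are triangulated by induction on the dimension, and every boundary simplex is then extended by the cube's center. You instead partition into side-$1$ cubes and use the Kuhn (Freudenthal) staircase triangulation indexed by permutations. Your route leans on one classical fact --- that Kuhn simplices over all of $\mathbb{Z}^d$ form a consistent simplicial complex, which you correctly identify as the only nontrivial point and which follows from the sort-the-fractional-parts characterization --- and in exchange the two required properties become immediate: the vertex set is exactly $\mathbb{Z}^d\cap[-r,r]^d$, and every edge difference is a $0$-$1$ vector. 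The paper's construction is more self-contained and deliberately parallels its $\ell_1$-ball triangulation from the previous section. One remark worth adding: the downstream application of this lemma (the Tucker-type \Cref{thm: tuckercube}) needs $\Gamma'$ to be antipodally symmetric on the boundary, a property the lemma statement omits; your construction has it too, since $x\mapsto -x$ maps the Kuhn simplex $\sigma_{z,\pi}$ to the Kuhn simplex based at $-z-\sum_j e_j$ with the reversed permutation, so the whole complex is invariant under negation.
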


\begin{proof}
For a simplex $\sigma$, we denote by $\mathcal{C}(\sigma)$ the set of all its faces, i.e., $\mathcal{C}(\sigma)=\bigcup_{A\subset V(\sigma)}\conv(A)$.
For a simplex $\sigma$ and a vertex $v$ affinely independent with $V(\sigma)$, denote $\mathrm{Ext}(\sigma,v)=\conv(V(\sigma)\cup\{v\})$ as the simplex obtained by extending $\sigma$ with $v$.

\begin{figure}[h]
	\centering
	\subfigure[A $2$-dimensional $\ell_{\infty}$-ball can be naturally partitioned into unit  balls.]
	{\scalebox{0.12}{\includegraphics{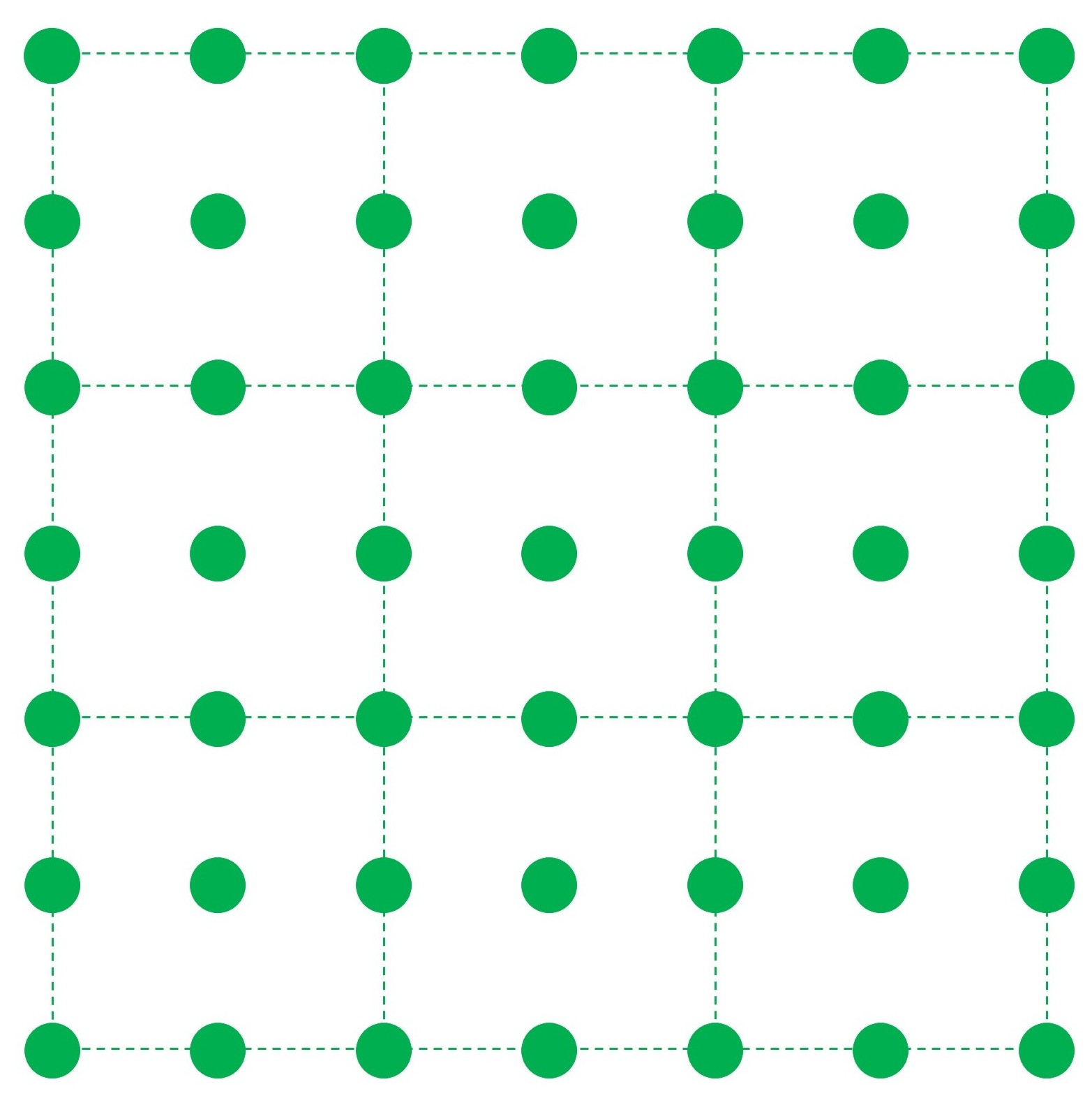}}}
	\hspace{0.8cm}
	\subfigure[Triangulation of the $2$-dimensional $\ell_{\infty}$-norm unit ball.]
	{
		\scalebox{0.32}{\includegraphics{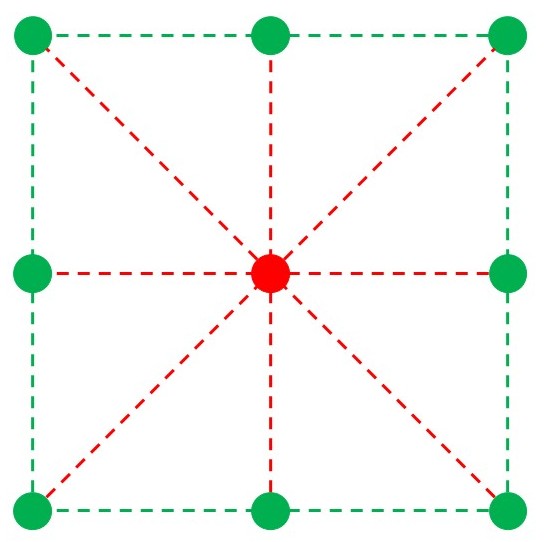}}}
        \hspace{0.8cm}
        \subfigure[Triangulation of $3$-dimensional $\ell_{\infty}$-norm unit ball. The red point is the center. Blue points form a $2$ dimensional $\ell_{\infty}$-norm unit ball.]
	{
		\scalebox{0.26}{\includegraphics{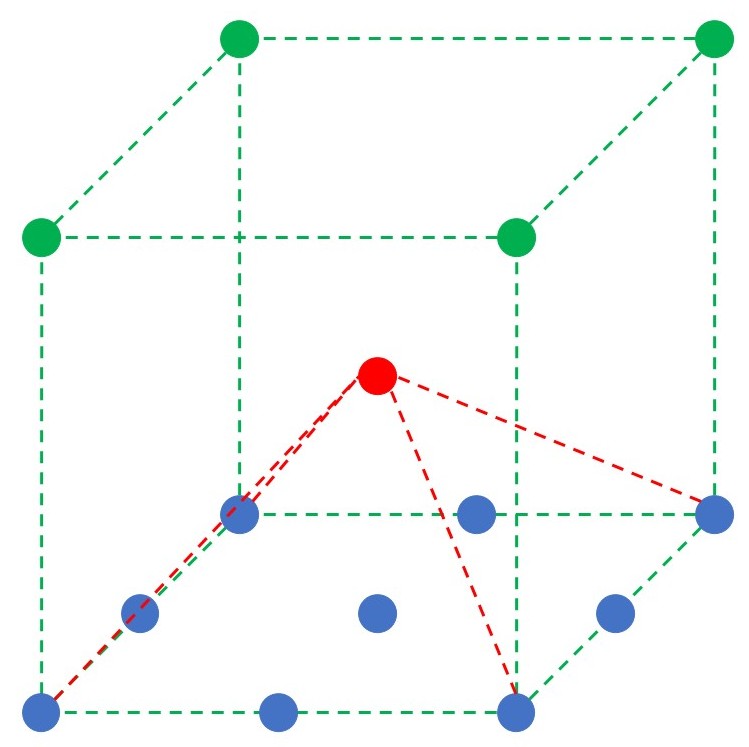}}}
	\caption{The $\ell_{\infty}$ norm ball grid and its decomposition, and the triangulations of $\ell_{\infty}$-norm unit balls in $2$ and $3$ dimensions.\label{fig: grid_triangulation}}
\end{figure}

We define $\Gamma$ via the following inductive construction:

For $d=2$, we define $\Gamma_0=\{(u,v)\in\mathbb{Z}^2:||u||_{\infty}=||v||_{\infty}=1,||u-v||_1=1\}$. Denote by $c$ the vertex corresponding to vector $\vec{0}$. Observe that the set $\{x\in\mathbb{R}^2:||x||_{\infty}\leq1\}$ can be triangulated by $\Gamma=\bigcup_{\sigma\in\Gamma_0}\mathcal{C}\big(\mathrm{Ext}(\sigma,c)\big)$. See \Cref{fig: grid_triangulation}.

For general $d>2$, we define \[F:=\bigg\{\{x=(x_1,\ldots,x_d)\in\mathbb{R}^d:||x||_{\infty}=1,x_j=\beta\}:j=1,\ldots,d,\beta=\pm1\bigg\},\]namely, the $d-1$ dimensional faces of the $\ell_{\infty}$-norm unit ball. By induction, for each face $f$ in $F$, there exists a triangulation of $f$. We denoted it as $\Gamma(f)$. We further define $\Gamma_{d-1}=\bigcup_{f\in F}\Gamma(f)$. Observe that $\{x\in\mathbb{R}^d:||x||_{\infty}\leq1\}$ can be triangulated by $\Gamma=\bigcup_{\sigma\in\Gamma_{d-1}}\mathcal{C}\big(\mathrm{Ext}(\sigma,c)\big)$. (See \Cref{fig: grid_triangulation})

By induction, one can verified that $\Gamma$ is a triangulation of $\{x\in\mathbb{R}^d:||x||_{\infty}\leq 1\}$.

The set $\{x\in\mathbb{R}^d:||x||_{\infty}\leq r\}$ can be naturally partitioned into $\ell_{\infty}$-norm unit balls, each of which can be triangulated in the same way as in $\Gamma$, as illustrated in \Cref{fig: grid_triangulation}. It gives a triangulation of $\{x\in\mathbb{R}^d:||x||_{\infty}\leq r\}$ denoted by $\Gamma'$. Note that $V(\Gamma')=V(\tilde{G})$, and, for every edge $(u,v)$ in $\Gamma'$, $||u-v||_{\infty}\leq1$.
\end{proof}

Given a dominating tree $T$ of $G$. For a pair $v,v'$ of vertices in $\tilde{G}$, we define $P_{\tilde{G}}^T(v,v')$ using the same definition as in \Cref{sec: tree cover}. For each vertex $v\in \tilde{G}$, we define the tree index of $v$ analogously. By definition, $i(v)=-i(-v)$ for all $||v||_{\infty}=r$. Note that tree indices on $V(\tilde{G})$ is a labeling $V(\Gamma')\to \set{+1,-1}^k$ of 
the triangulation $\Gamma'$.

By the following lemma similar to \Cref{lem: complemenntary edge implies far}, we show that neighboring vertices with opposite indices have large tree distances. The proofs are almost the same as the corresponding proofs (with small changes) in \Cref{lem: complemenntary edge implies far}. We present the proofs here for the sake of completeness. 

\begin{lemma}\label{lem: ramsey complemenntary edge implies far}
For a tree $T$ and vertices $u,v$ with $||u-v||_{\infty}\leq1$, if $i_{T}(u)= -i_{T}(v)$, then $\dist_{T}(u,v)\geq 2r-d$.
\end{lemma}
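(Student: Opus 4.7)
The plan is to essentially mirror the proof of \Cref{lem: complemenntary edge implies far}, since the overall argument depends only on (i) parity of special-edge crossings along root-to-vertex paths, (ii) an $\ell_1$-length bound for paths with an even number of special edges, and (iii) the $\ell_1$-distance between antipodal boundary points. Only the third ingredient changes when moving from the $\ell_1$-ball construction to the $\ell_\infty$-ball construction, and I claim it still yields the same $2r$ bound.

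First, since $i_T(u)=-i_T(v)$, without loss of generality one path (say from $c$ to $u$) uses an odd number of special edges in $\Esp$ and the other an even number. Rooting $T$ at $c$, the path $P_{\tilde G}^T(u,v)$ is either a concatenation of $P_{\tilde G}^T(u,c)$ with $P_{\tilde G}^T(c,v)$ (when $u,v$ lie in different subtrees) or a subtraction of one from the other (when one is an ancestor of the other); in either case a parity accounting gives $|P_{\tilde G}^T(u,v)\cap\Esp|$ odd. I would then split $P_{\tilde G}^T(u,v)$ at the first consecutive pair $(w,-w)$ of antipodal boundary vertices (with $\|w\|_\infty=r$) that appear via a special edge, so that the prefix $P(u,w)$ contains no antipodal boundary pair and both $P(u,w)$ and $P(-w,v)$ use an even number of special edges.

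Next I would invoke the analog of \Cref{lem: odd times transportation far than straight line distance} in the $\ell_\infty$ setting. The statement and proof carry over verbatim: since the unweighted edges $E_0$ have $\ell_1$-length one and the special edges come in antipodal pairs on $\|\cdot\|_\infty=r$, the alternating-flip construction still produces a path in $\tilde G\setminus\Esp$ of the same length, and an even number of flips means the endpoint returns to its original sign; hence any path $P$ in $\tilde G$ from $u$ to $v$ with $|E(P)\cap\Esp|$ even satisfies $\ell(P)\ge \|u-v\|_1$. Applying this to $P(u,w)$ and $P(-w,v)$ yields $\ell(P(u,w))\ge \|u-w\|_1$ and $\ell(P(-w,v))\ge \|{-w}-v\|_1$.

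Finally, by the triangle inequality for $\|\cdot\|_1$,
\[
\dist_T(u,v)=\ell(P(u,w))+\ell(P(-w,v))\ge \|u-w\|_1+\|{-w}-v\|_1\ge \|w-(-w)\|_1-\|u-v\|_1.
\]
Here is the only substantive change from the $\ell_1$-ball case: in the $\ell_\infty$-ball setting, $\|w\|_\infty=r$ implies $\|w\|_1\ge \|w\|_\infty=r$, so $\|w-(-w)\|_1=2\|w\|_1\ge 2r$; meanwhile $\|u-v\|_1\le d\,\|u-v\|_\infty\le d$. Combining these gives $\dist_T(u,v)\ge 2r-d$, as required. The main (minor) obstacle is verifying that the even-parity $\ell_1$-length lemma is preserved under the new boundary geometry, but since the alternating-flip construction only uses that special edges identify antipodal points, it transfers without change.
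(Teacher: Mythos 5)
Your proposal is correct and follows essentially the same route as the paper's own proof: the same parity argument on special-edge crossings, the same split of $P_{\tilde{G}}^T(u,v)$ at an antipodal pair $(w,-w)$, and the same even-parity $\ell_1$-length claim, with the final bound $\|w-(-w)\|_1=2\|w\|_1\ge 2\|w\|_\infty=2r$ being exactly the adjustment needed for the $\ell_\infty$-ball boundary. No gaps.
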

Similarly, the proof of \Cref{lem: ramsey complemenntary edge implies far} requires the following lemma.

\begin{claim}
\label{lem: ramsey odd times transportation far than straight line distance}
    Given a vertex pair $u,v\in V(\tilde{G})$ and a path $P$ between $u$ and $v$ in $\tilde{G}$. If $|E(P)\cap \Esp|$ is an even number, then $\ell(P)\geq||u-v||_1$.
\end{claim}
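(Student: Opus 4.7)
The plan is to mimic the proof of \Cref{lem: odd times transportation far than straight line distance} almost verbatim, since the only structural change is that special edges now connect antipodes on the $\ell_\infty$-sphere of radius $r$ instead of the $\ell_1$-sphere of radius $r$. The statement itself (length bounded by $\ell_1$-distance) and the crucial feature of $\Esp$ (namely that each special edge joins a vector with its negative) are unchanged, so the same ``alternate-flipping'' construction will go through.

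Concretely, I would split $P$ at each of its $t := |E(P) \cap \Esp|$ special edges into $t+1$ maximal sub-paths $P_0, P_1, \ldots, P_t$, each lying entirely in $\tilde{G} \setminus \Esp$ and hence using only unit-weight $E_0$-edges. Then I would build a new walk $\tilde{P}$ by keeping $P_j$ as is when $j$ is even and replacing $P_j$ by its central reflection $-P_j$ (the image under $x \mapsto -x$) when $j$ is odd, and then concatenating the pieces. At the seam between $P_{j-1}$ and $P_j$ the original path used a special edge of the form $(w, -w)$; after flipping, the endpoint of the (possibly reflected) $P_{j-1}$ and the starting point of the (oppositely reflected) $P_j$ coincide, so $\tilde{P}$ is a genuine connected walk in $\tilde{G} \setminus \Esp$. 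It remains a walk in $\tilde{G}$ because both $V(\tilde{G})$ and $E_0$ are symmetric under $x \mapsto -x$.

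Two observations then finish the proof. First, central reflection is an isometry of $\mathbb{R}^d$ in every $\ell_p$-norm, and it preserves the edge set $E_0$, so $\ell(P_j) = \ell(-P_j)$ for each $j$; since special edges have weight $0$, $\ell(\tilde{P}) = \ell(P)$. Second, $t$ is even by hypothesis, so $P_t$ is kept un-reflected and $\tilde{P}$ ends at $v$ rather than $-v$. Thus $\tilde{P}$ is a $u$-to-$v$ walk using only unit $\ell_1$-edges of $\tilde{G}$, and consequently $\ell(\tilde{P}) \geq \|u - v\|_1$, yielding $\ell(P) \geq \|u - v\|_1$.

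I do not expect any serious obstacle; the only point that demands a sentence of care is verifying that the reflected pieces glue correctly at the special-edge seams, which is immediate from $(w, -w) \in \Esp$ and the evenness of $t$. Given the complete parallel with the $\ell_1$ version, the authors probably just copy the figure and proof, which is exactly why the statement reads ``for the sake of completeness.''
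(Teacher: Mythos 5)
Your proposal is correct and is essentially identical to the paper's own argument: the paper also splits $P$ at its special edges and builds $\tilde{P}$ by negating every other segment, using evenness of $t$ to conclude the flipped walk still ends at $v$ and lies entirely in $E_0$, whence $\ell(P)=\ell(\tilde P)\ge\|u-v\|_1$. No issues.
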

\begin{proof}
    Without loss of generality, we assume \[P=\big(v_{00}(=u),v_{0 1},\ldots, v_{0s_0},v_{10},\ldots, v_{1s_1},v_{20},\ldots, v_{(t-1)s_{t-1}},v_{t0},\ldots, v_{ts_t}(=v)\big),\] where for each $j=0,\ldots,t-1$, $v_{js_j}=-v_{(j+1)0}$ and $||v_{js_j}||_{\infty}=r$. Then we construct \[\tilde{P}=\big(v_{00},v_{0 1},\ldots, v_{0s_0},-v_{10},\ldots ,-v_{1s_1},v_{20},\ldots, (-1)^{t-1}v_{(t-1)s_{t-1}},(-1)^tv_{t0},\ldots, (-1)^tv_{ts_t}\big).\]

    It can be observed that $\tilde{P}$ is a path in $\tilde{G}\setminus\Esp$, namely containing no edges in $\Esp$. Since $t=|P\cap \Esp|$, $2\mid t$. Then $(-1)^tv_{ts_t}=v_{ts_t}=v$. Therefore, $\ell(P)=\ell(\tilde{P})\geq||u-(-1)^tv_{ts_t}||_1=||u-v||_1$.
\end{proof}

\begin{proof}[Proof of \Cref{lem: ramsey complemenntary edge implies far}]
Since $i_T(u)=-i_T(v)$, by the same analysis in the proof of \Cref{lem: complemenntary edge implies far}, $2\nmid |P_{\tilde{G}}^T(u,v)\cap\Esp|$.
We can assume without loss of generality that $P_{\tilde{G}}^T(u,v)$ consecutively pass through the vertex $w$ and $-w$ with $||w||_{\infty}=r$, such that the first part split from $P_{\tilde{G}}^T(u,v)$ (denoted by $P(u,w)$) contains no vertex pair $i_1,i_2$ with $i_1=-i_2$ and $||i_1||_{\infty}=r$. The other part split from $P_{\tilde{G}}^T(u,v)$ is denoted as $P(-w,v)$. It follows that $2\mid |P(u,w)\cap\Esp|$ and $2\mid |P(-w,v)\cap\Esp|$. By \Cref{lem: ramsey odd times transportation far than straight line distance}, we have $\ell(P(u,w))\geq||u-w||_1$ and $\ell(P(-w,v))\geq||-w-v||_1$. Therefore, 
\[
\begin{split}
\dist_{T}(u,v) & =\ell(P_{\tilde{G}}^{T}(u,v))=\ell(P(u,w))+\ell(P(-w,v))\\
& \geq||u-w||_1+||-w-v||_1\geq||w-(-w)||_1-||u-v||_1\geq2r-d\cdot||u-v||_{\infty}\geq 2r-d.
\end{split}
\]
\end{proof}

As we mentioned before, $i(v)=-i(-v)$ for all vertices $v$ with $||v||_{\infty}=r$. Thus, by \Cref{thm: tuckercube}, the triangulation $\Gamma'$ contains a neutral simplex. That is, there exists a simplex $\sigma$ such that for all $1\le j\le d$ ($d=k$), there are $v,v'\in V(\sigma)$ satisfying $i_{T_j}(v)=-i_{T_j}(v')$.

We arbitrarily choose a vertex $u$ in $V(\sigma)$. For each tree $T_j$ in the tree cover $\set{T_1,\ldots,T_k}$, there exist $v,v'\in V(\sigma)$ such that $i_{T_j}(v)=-i_{T_j}(v')$. By \Cref{lem: ramsey complemenntary edge implies far}, this implies that $\dist_{T_j}(v,v')\geq 2r-d$. Since $\dist_{T_j}(u,v)+\dist_{T_j}(u,v')\geq\dist_{T_j}(v,v')$, without loss of generality, we can assume that $\dist_{T_j}(u,v)\geq \frac{r}{2}$. Note that $\dist_{G}(u,v)\leq||u-v||_1\leq d||u-v||_{\infty}\leq d$. Therefore, 
\[\min_{1\le i\le k}\set{\max_{v\in V(G)\setminus\set{u}}\frac{\dist_{T_i}(u,v)}{\dist_{G}(u,v)}}=\Omega\bigg(\frac{r}{d}\bigg)=\Omega\bigg(\frac{n^{\frac{1}{k}}}{k}\bigg).\]

\appendix

\section{A $(O(1),2)$-Tree Cover for Grid}
\label{apd: two trees}
Consider the grid $\{x\in\mathbb{Z}^2:||x||_{\infty}\leq\sqrt{n}\}$ on $O(n)$ vertices. In this section, we present the construction of a size-two tree cover $\{T_1,T_2\}$ that covers the grid with distortion $O(1)$.
A similar construction (with much detailed presentation) is given in two concurrent works \cite{le2025coveringeuclideanplanepair,bikeev2025treecoverssize2}.

\begin{figure}[h]
	\centering
	\subfigure[The recursive construction of $T_1$: Tree edges are shown in blue. $c$ is the center of $T_1$, $c_1$ is the center of its northwest subtree, $c_2$ is the center of the southwest subtree of the $c_1$-tree. The path from $c$ to a vertex $v$ in the $c_2$ tree is shown in orange.]
	{
		\scalebox{0.18}{\includegraphics{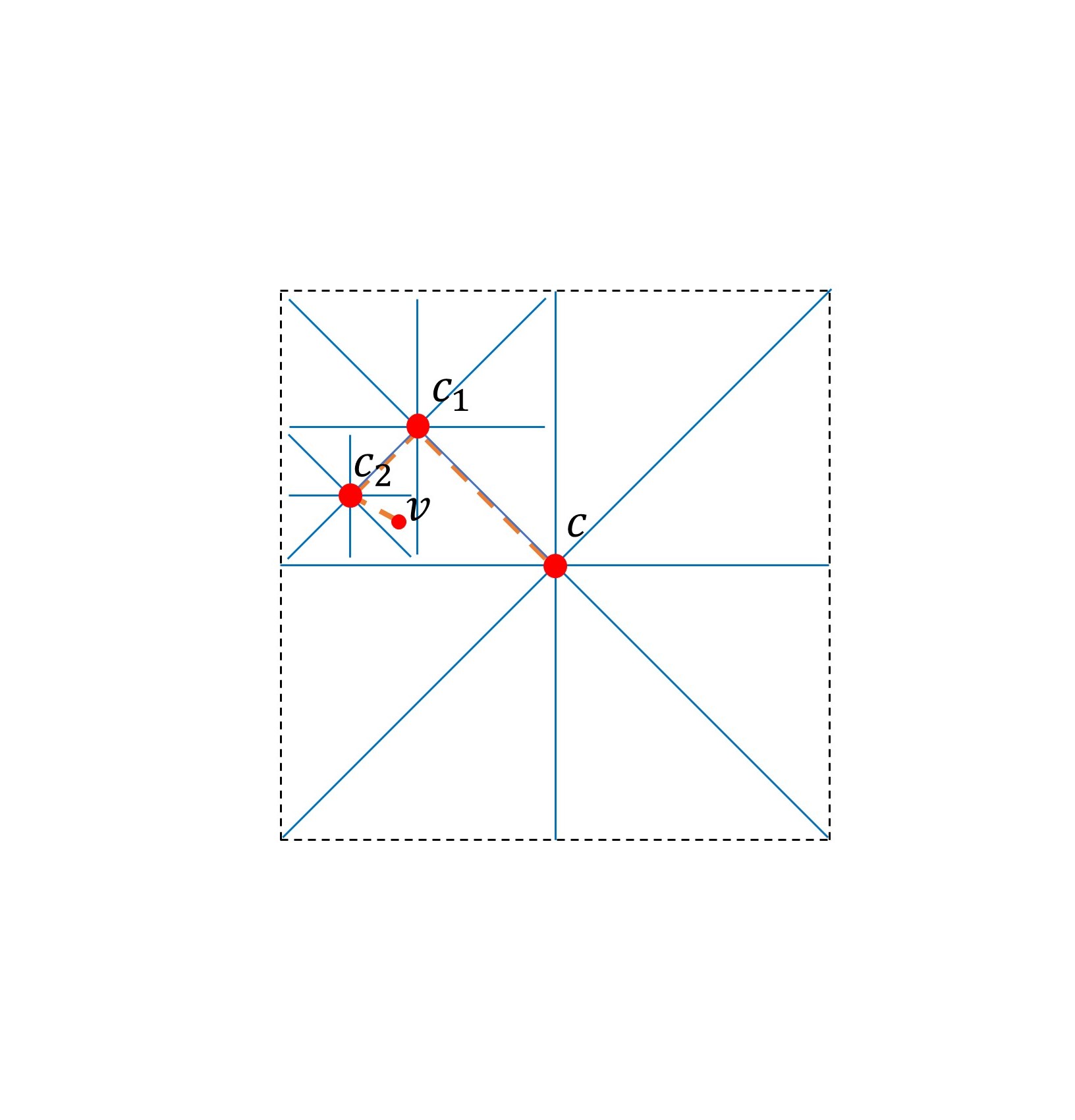}}}
        \hspace{0.2cm}
        \subfigure[The recursive construction of $T_2$: Tree edges are shown in blue. $c$ is the center of $T_2$, $c_1$ is the center of its north subtree, $c_2$ is the center of the west subtree of the $c_1$-tree. The path from $c$ to a vertex $v$ in the $c_2$ tree is shown in orange. Gray lines form a tilted larger grid.]
	{
		\scalebox{0.18}{\includegraphics{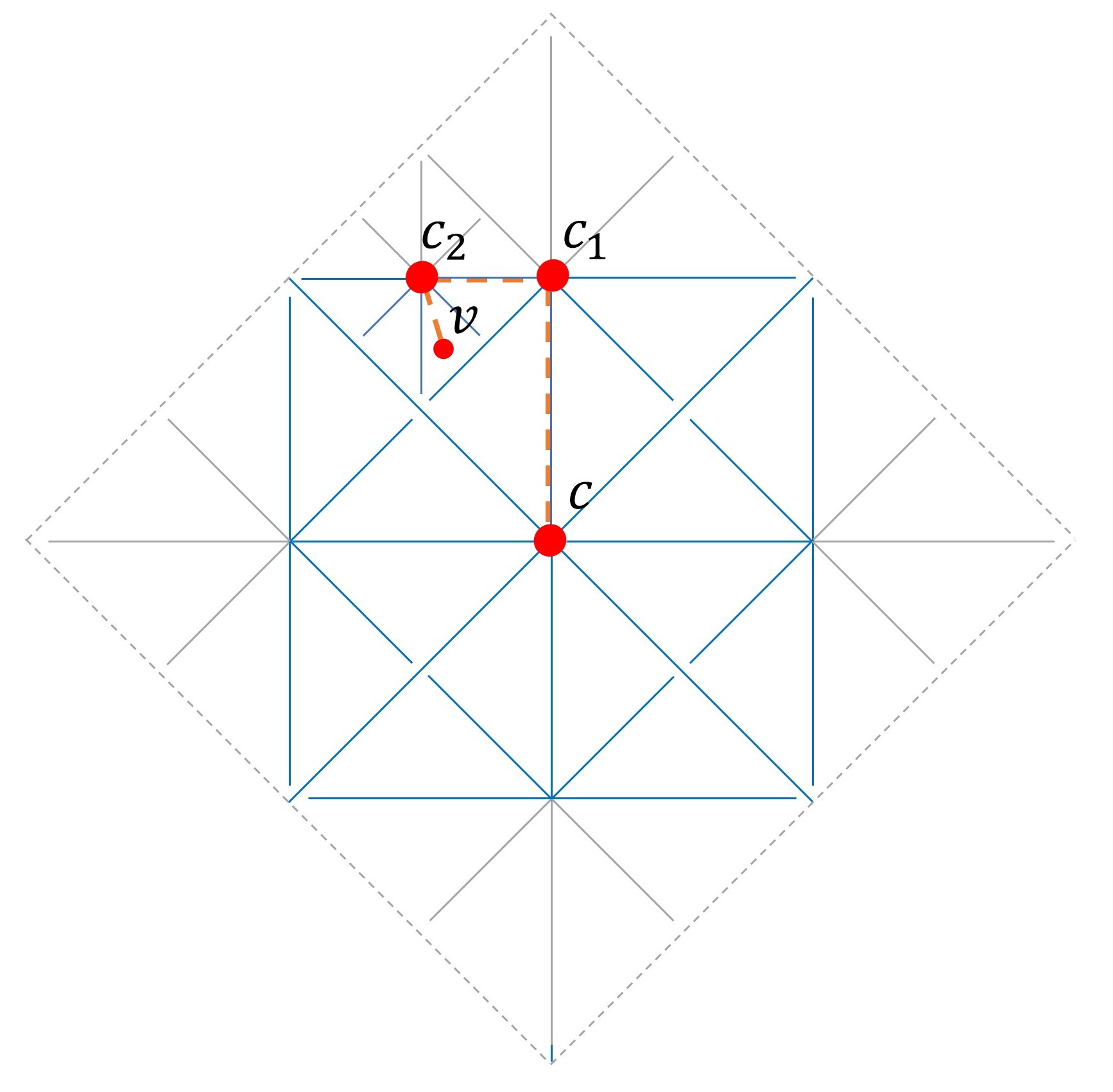}}}
	\caption{The construction of $T_1$ and $T_2$.\label{fig: grid_2trees_construction}}
\end{figure}

\iffalse
We define $T_1$ recursively. We begin by treating the entire grid as the first layer and partition it into eight regions using tree edges. Then we merge diagonally neighboring regions (see \Cref{fig: grid_2trees_proof}) to form four smaller square components, which constitute the second layer. Each of these square components is then recursively subdivided into eight new regions using tree edges, and the process continues until the regions in the final layer are sufficiently small.
See \Cref{fig: grid_2trees_construction} for an illustration.\fi

We construct $T_1$ as follows. We view $T_1$ as being rooted at the vertex $(0,0)$ and consisting of $8$ subtrees below $c$. Four of them are paths: north path $\big( (0,0),(0,1),\ldots,(0,\sqrt{n}) \big)$, south path: $\big( (0,0),\ldots,(0,-\sqrt{n}) \big)$, the east path $\big( (0,0),\ldots,(\sqrt{n},0) \big)$, and the west path: $\big( (0,0),\ldots,(-\sqrt{n},0) \big)$.
The other remaining are subtrees: the northwest tree, the northeast tree, the southwest tree, and the southeast tree, containing all vertices in different orthants (excluding the points on the $x$ and $y$ axis), respectively. The northwest tree, centered at $(-\frac{\sqrt{n}}{2},\frac{\sqrt{n}}{2})$, is defined recursively (developing $8$ subtrees from $(-\frac{\sqrt{n}}{2},\frac{\sqrt{n}}{2})$, four paths and four subtrees further towards different directions, etc), and is connected to the root $(0,0)$ by the single edge $\big((-1,1),(0,0)\big)$.
See \Cref{fig: grid_2trees_construction} for an illustration. 

The recursive construction naturally defines a quad-tree structure of the grid: layer $1$ is the whole grid; layer $2$ consist of $4$ ``areas'', corresponding to the vertex sets of the northwest, northeast, southwest, and southeast subtrees (vertices on four paths are excluded from these regions).
Each of these areas are further divided into two ``regions'' along its diagonal, so the second layer contains a total $8$ regions. See \Cref{fig: grid_2trees_proof} for an illustration. Deeper layers and its regions are defined recursively in a similar way.
We use the following simple property.

\begin{observation}
\label{obs: to center}
For each center $c$, and for each vertex $v$ in the subtree centered at $c$, the $c$-$v$ grid distance is preserved up to factor $O(1)$ in tree $T_1$. In particular, the grid distances from all vertices to the layer-$1$ center $(0,0)$ is preserved within factor $O(1)$ in $T_1$.
\end{observation}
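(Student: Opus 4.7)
I would prove the observation by induction on the recursive depth of the subtree. A size-$R$ subtree centered at $c$ consists, by construction, of $c$ itself, four axis paths from $c$, and four orthant sub-sub-trees, each a smaller $T_1$-like structure centered at some $c_1 = c + (\pm R/2, \pm R/2)$ and connected to $c$ by the single edge from its ``inner corner'' $u_1 = c + (\pm 1, \pm 1)$ to $c$. The base case of subtrees of constant size is immediate.

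If $v = c$ or $v$ lies on one of the four axis paths of $c$, the tree path from $v$ to $c$ is a straight line in the grid, so $\dist_{T_1}(v,c) = \dist_G(v,c)$ and we are done. Otherwise $v$ lies in one of the orthant sub-sub-trees, which by symmetry I may take to be NW with center $c_1$ and connector $u_1 = c + (-1,1)$; here $\dist_{T_1}(v,c) = \dist_{T_1|_{\text{NW}}}(v,u_1) + 2$. I would then split on whether $v$ lies in the sub-sub-tree of $c_1$ containing $u_1$ (call it $S$, the SE sub-sub-tree of $c_1$). If $v \notin S$, then the tree path from $v$ to $u_1$ in NW must pass through $c_1$, and by the triangle inequality together with the inductive hypothesis applied within NW, $\dist_{T_1|_{\text{NW}}}(v,u_1) \le \dist_{T_1|_{\text{NW}}}(v,c_1) + \dist_{T_1|_{\text{NW}}}(c_1,u_1) = O(R)$. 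In this case one checks that $\dist_G(v,c) \ge R/2$ (since the ``far'' orthant sub-sub-trees of $c_1$ are at grid-distance at least $R/2$ from $c$), so the ratio $\dist_{T_1}(v,c) / \dist_G(v,c) = O(1)$. If $v \in S$, the tree path stays inside $S$, and I recurse: $S$ is a smaller $T_1$-like subtree in which $u_1$ is again the ``far corner''.

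\textbf{Main obstacle.} The subtle point is the recursive case above, where the target $u_1$ sits at a \emph{corner} of $S$ rather than at its center, so the inductive hypothesis (which bounds distances from any vertex to the center of its enclosing subtree) does not apply directly to $\dist_{T_1|_S}(v, u_1)$. To close the induction I would unroll this recursion: repeatedly split on whether $v$ lies in the sub-sub-tree of the current center containing $u_1$, descending into progressively smaller subtrees until $v$ and $u_1$ separate into different sub-sub-trees at some recursion level, at which point the non-recursive (triangle-inequality) bound above applies at the current scale and yields an $O(1)$ multiplicative ratio relative to the grid distance at that scale. The contributions from the deeper descents are only additive (each descent adds a bounded-weight edge crossing between nested centers and corners), and summing across all levels yields a telescoping bound of $O(\dist_G(v,c))$, which completes the proof.
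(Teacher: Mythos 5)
The paper does not actually prove this observation --- it is stated as a ``simple property'' with no accompanying argument --- so your proposal is supplying a proof where the paper omits one, and the overall structure (induction on scale, splitting on whether $v$ is on an axis path, in the orthant sub-subtree containing the connector corner, or separated from it, and then a triangle inequality through the relevant center at the separation scale) is the right one and does work. Your identification of the main obstacle is also accurate: the target $u_1$ sits at a corner, not a center, of every nested subtree it belongs to, so the plain inductive hypothesis never applies to it directly.

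One point in your closing paragraph needs to be stated more carefully, because as written it is not quite what happens. You say each descent toward $u_1$ ``adds a bounded-weight edge,'' but the per-level cost of the path from $c_j$ down to $u_1$ is not $O(1)$ per level: the inner corner of the sub-square at level $i$ (the vertex by which that sub-square attaches to $c_{i-1}$) is the corner \emph{antipodal} to $u_1$ within that sub-square, so getting from it to the next center $c_i$ is a full corner-to-center traversal costing $\Theta(r/2^i)$. The correct accounting is that the center-to-corner distance $h(r)$ at scale $r$ satisfies $h(r)\le 2h(r/2)+O(1)$, hence $h(r)=O(r)$, and the total cost of the descent below the separation level $j$ is the geometric sum $\sum_{i>j}O(r/2^i)=O(r/2^j)$. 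This is then matched against $d_G(v,c)=\Omega(r/2^j)$, which holds because $v$ lies outside the level-$(j+1)$ sub-square cornered near $c$. Your word ``telescoping'' suggests you have this in mind, but proving the additive bounds $h(r)=O(r)$ and $\max_v \dist_{T_1}(v,c')=O(r)$ separately (by their own clean recursions) before running the multiplicative argument is what keeps the constant from inflating at each level of the induction; with that made explicit the proof is complete.
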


%This recursive construction ensures constant distortion for vertex pairs in the same final-layer region, since each such region can be connected into a subtree with bounded vertex number. The overall structure resembles a hierarchical tree, where each component at every layer corresponds to a node in the tree structure. Thus, for any vertex $v$, the distances from $v$ to the centers of the components it belongs to at each layer are preserved up to a constant factor in $T_1$.

$T_2$ is constructed as follows. We start by viewing the original grid as a subset of a larger grid rotated by $\pi/4$ (a "tilted grid"). We then construct a tree $T^*$ recursively on this tilted grid, in the same way as $T_1$, and finally restrict this tree onto the original grid (it is easy to verify that $T^*$ induces a connected subtree on vertices of the original grid). This restricted tree is our $T_2$.
See \Cref{fig: grid_2trees_construction} for an illustration. It can be shown that \Cref{obs: to center} also holds for $T_2$.

%Similarly, vertex pairs in the same final-layer region has only constant distortion. Additionally, for any vertex $v$, the distances from $v$ to the centers of the components it belongs to at each layer are preserved up to a constant factor in $T_2$.

\begin{figure}[h]
	\centering
	\subfigure[Each component can be partitioned into eight regions. $u,v$ lie in non-adjacent regions.]
	{
		\scalebox{0.145}{\includegraphics{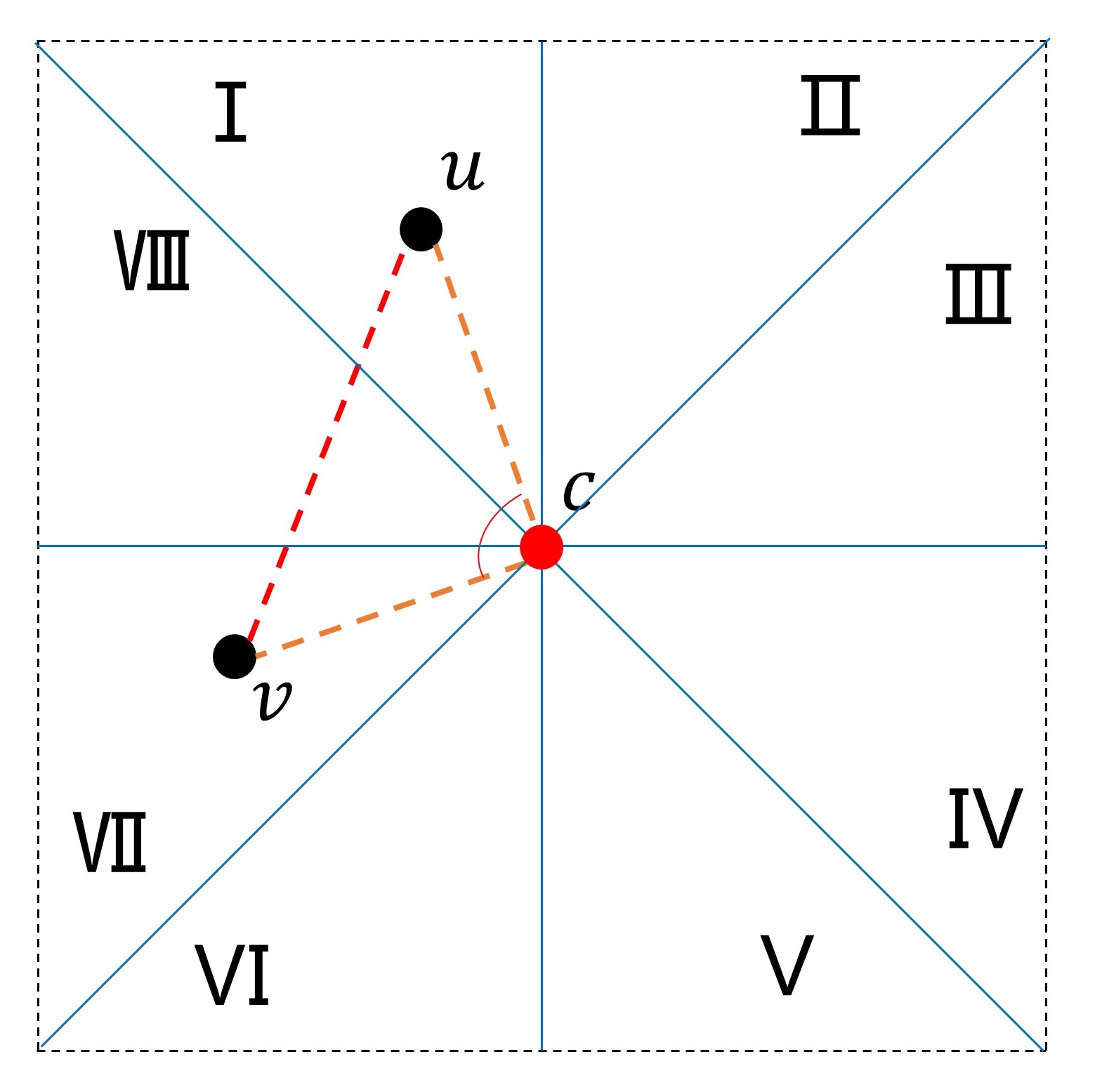}}}
        \hspace{0.3cm}
        \subfigure[Region pairs that are diagonally neighboring.]
	{
		\scalebox{0.16}{\includegraphics{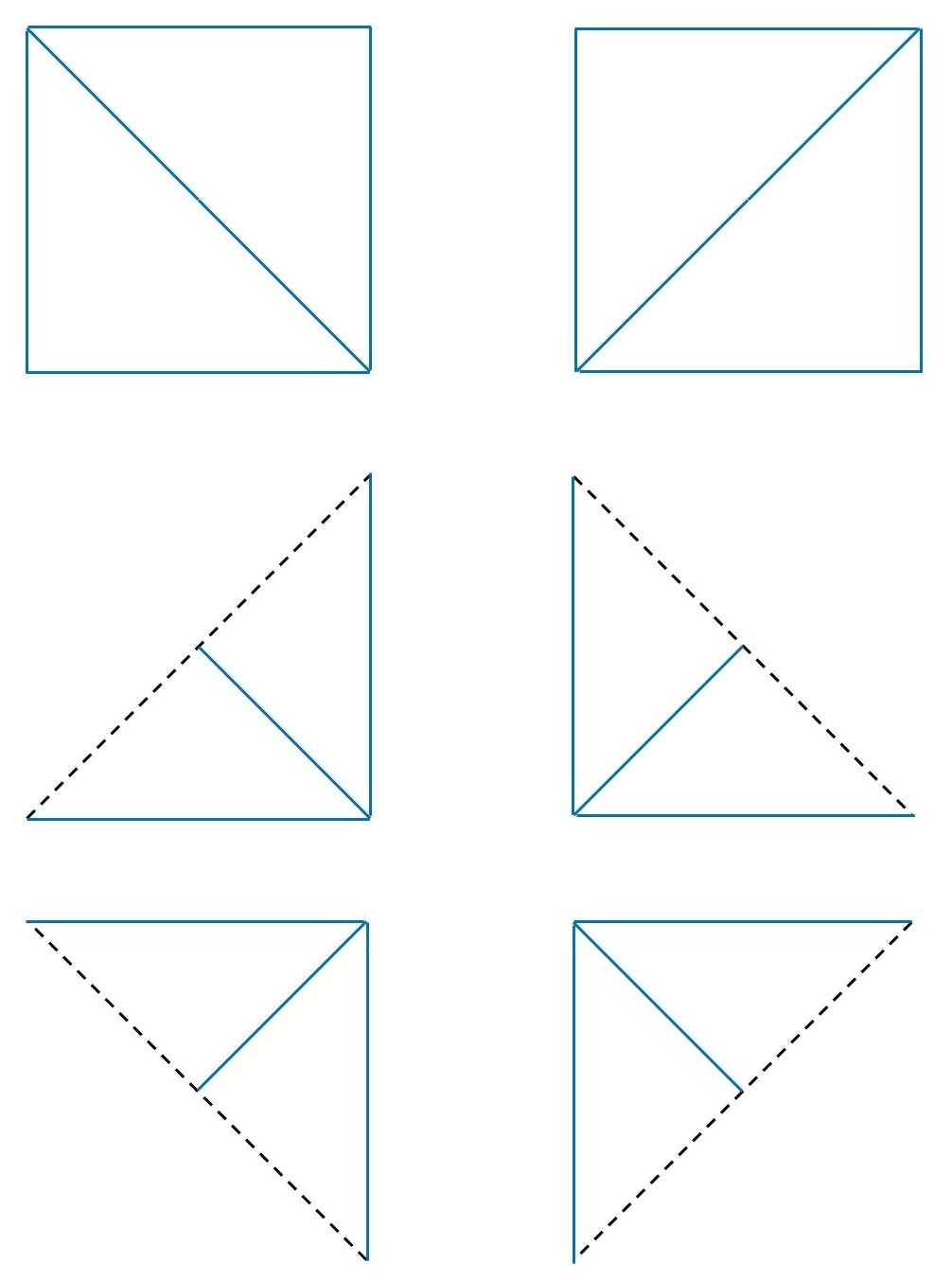}}}
        \hspace{0.3cm}
        \subfigure[Region pairs that are orthogonally neighboring.]
	{
		\scalebox{0.16}{\includegraphics{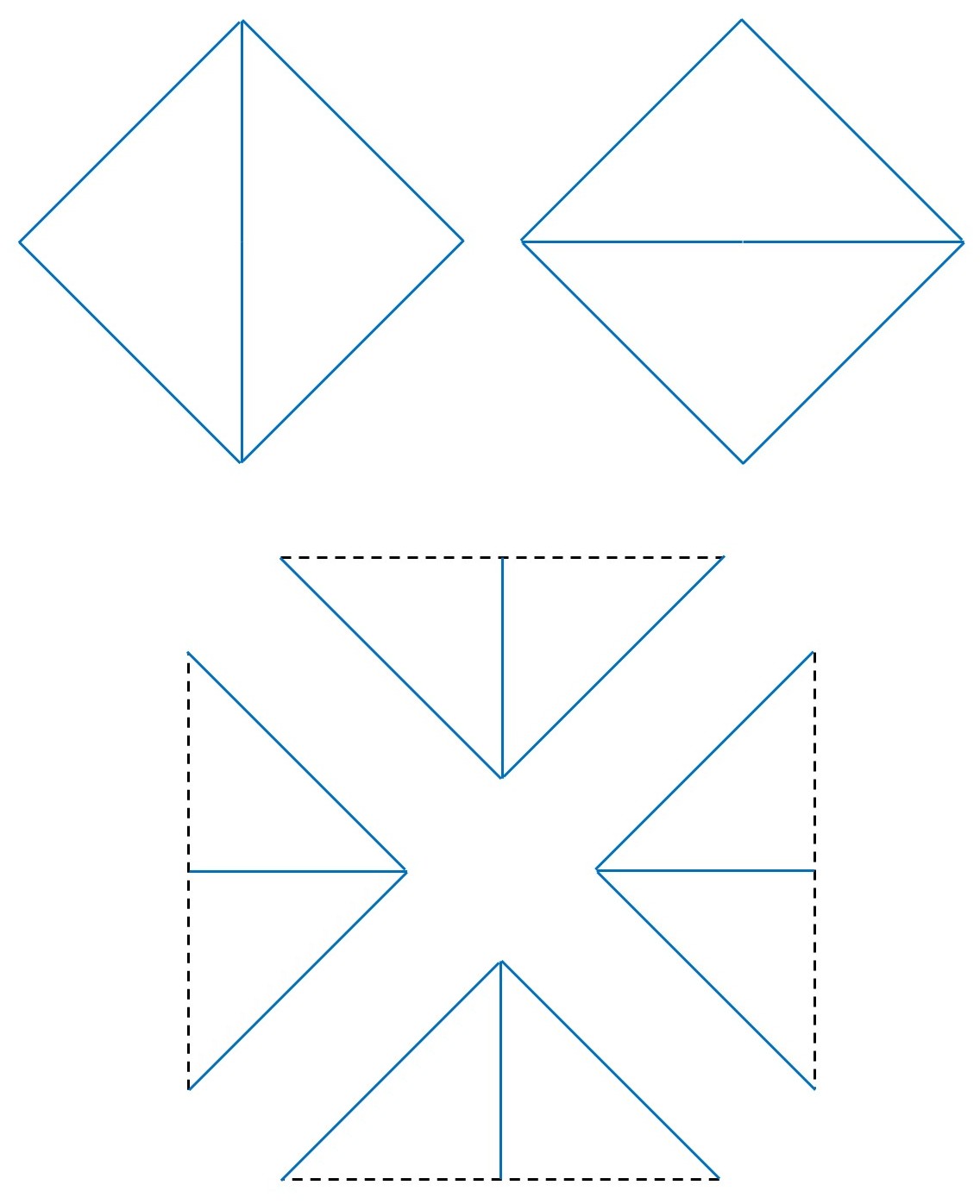}}}
	\caption{Eight regions partitioned from a component, diagonal and orthogonal neighboring.\label{fig: grid_2trees_proof}}
\end{figure}

We now show that the two trees $T_1,T_2$ constructed above cover the grid with distortion $O(1)$.

Consider a pair $u,v$ of vertices in the grid. Let $c$ be the center of largest-indexed layer in $T_1$ that has an area containing both $u$ and $v$. 
Recall that the $c$-tree is divided into $8$ regions.
We distinguish between the following cases.

\paragraph{Case 1. $u,v$ lie in non-adjacent regions} (e.g., regions I and VII  in \Cref{fig: grid_2trees_proof}).
Let $\theta$ denote the angle between vectors $\vec{uc}$ and $\vec{vc}$, then $\pi/4\leq\theta \le 7\pi/4$. Denote by $C$ the constant hidden in \Cref{obs: to center}, by basic trigonometry,
\begin{align*}
    d(u,v)&=\sqrt{d^2(c,u)+d^2(c,v)-2\cos\theta\cdot d(c,u)d(c,v)}\\
    &\geq\sqrt{d^2(c,u)+d^2(c,v)-\sqrt{2} d(c,u)d(c,v)}\\
    &\geq\frac{\sqrt{2-\sqrt{2}}}{2}\cdot \bigg(d(c,u)+d(c,v)\bigg).
\end{align*}
Denote $C'=\frac{2C}{\sqrt{2-\sqrt{2}}}$. Then the tree distance in $T_1$ between $u$ and $v$ satisfies \begin{align*}
    d_{T_1}(u,v)&=d_{T_1}(u,c)+d_{T_1}(c,v)\leq C\cdot(d(u,c)+d(c,v))\\
    &\leq\frac{2C}{\sqrt{2-\sqrt{2}}}\cdot d(u,v)=C'\cdot d(u,v),
\end{align*}
where $d(u,v)$ denotes the grid distance between $u$ and $v$.

\paragraph{Case 2. $u,v$ lie in adjacent regions.} There are two possibilities: diagonally neighboring (e.g., regions I and VIII) or orthogonally neighboring (e.g., regions I and II). In the first case, observe that $u$ and $v$ belong to the same area in the next layer of $T_1$, a contradiction to the assumption that $c$ is the center of largest-indexed layer in $T_1$ that has an area containing both $u$ and $v$. In the second case, they belong to the same area in the next layer of $T_2$. We may then zoom in that layer, further distinguish between the cases where they lie in adjacent or non-adjacent regions, and repeat our analysis.

To sum up, for a pair $u,v$ of vertices: either
\begin{itemize}
    \item they lie in non-adjacent regions at some layer, in which case by similar analysis in Case 1 we can show that their grid distance is preserved up to factor $O(1)$; or
    \item they lie in the same area in the last layer of $T_1$ or $T_2$, which is a subtree of $O(1)$ size and therefore preserves grid distances within $O(1)$ factor.
\end{itemize} 

\bibliographystyle{alpha}
\bibliography{main_arxiv}
\end{document}